\documentclass[a4paper,english,cleveref, autoref, thm-restate]{lipics-v2021}

\pdfoutput=1 %
\hideLIPIcs  %

\usepackage[utf8]{inputenc}
\usepackage{mathtools}
\usepackage{xparse}
\usepackage{todonotes}
\usepackage[numbers,sort]{natbib}
\usepackage{algorithm}
\usepackage[noend]{algpseudocode}
\algnewcommand\algorithmicinput{\textbf{Input:}}
\algnewcommand\algorithmicoutput{\textbf{Output:}}
\algnewcommand\Input{\item[\algorithmicinput]}
\algnewcommand\Output{\item[\algorithmicoutput]}

\usepackage{tabularx,booktabs,multirow} 
\usepackage{array, makecell}
\usepackage[shortlabels]{enumitem}
\usepackage{amsthm}
\usepackage{amsmath}
\usepackage{amssymb}
\usepackage{amsfonts}

\usepackage{tikz}
\usetikzlibrary{arrows,decorations.pathmorphing,decorations.pathreplacing,backgrounds,positioning,fit,matrix,shapes,calc,decorations.text}

\tikzset{
	position/.style args={#1:#2 from #3}{
		at=(#3.#1), anchor=#1+180, shift=(#1:#2)
	},
	vertex/.style = {circle, minimum size=16pt, draw, inner sep=1pt},
	timearc/.style = {draw=black,->,shorten <=1.5pt, shorten >=1.5pt,>=stealth',
		nodes={sloped,font=\scriptsize},	
	},
}

\usepackage{subcaption}

\usepackage{doi} %

\newtheorem{invariant}[theorem]{Invariant}

\Crefname{observation}{\text{Observation}}{\text{Observations}}

\newcommand{\NN}{\mathbb{N}} %

\newcommand{\bigO}{\mathcal{O}} %

\newcommand\abs[1]{\left|#1\right|} %

\newcommand{\GG}{\mathcal{G}} %

\newcommand\drc{\textsc{Delay-Robust Connection}}

\newcommand{\drgame}{Delayed-Routing Game}
\newcommand{\drpgame}{Delayed-Routing Path Game}
\newcommand\drg{\textsc{Delayed-Routing Game}}
\newcommand\drpg{\textsc{Delayed-Routing Path Game}}

\newcommand\qbf{\textsc{QBF}}
\newcommand\qbfg{\textsc{QBF Game}}

\newcommand{\lepoly}{\le_{\textnormal{m}}^{\textnormal{poly}}}

\newcommand{\CNP}{\textnormal{NP}}
\newcommand{\CPSPACE}{\textnormal{PSPACE}}

\newcommand{\CSHRPP}{\textnormal{\#P}}

\DeclarePairedDelimiterX{\set}[1]{\{ }{ \} }{\setargs{#1}}
\NewDocumentCommand{\setargs}{>{\SplitArgument{1}{;}}m}
{\setargsaux#1}
\NewDocumentCommand{\setargsaux}{mm}
{\IfNoValueTF{#2}{#1} {#1\,\delimsize|\,\mathopen{}#2}}%

\DeclareMathOperator*{\argmax}{arg\,max}
\DeclareMathOperator*{\argmin}{arg\,min}

\DeclareMathOperator{\tastart}{start}
\DeclareMathOperator{\taend}{end}
\DeclareMathOperator{\tatime}{t}
\DeclareMathOperator{\tadest}{\taend}
\DeclareMathOperator{\tatrav}{\lambda}

\newcommand{\true}{\texttt{true}}
\newcommand{\false}{\texttt{false}}

\newcommand{\mvert}{\;\middle\vert\;}

\newcommand{\problemdef}[3]{
		\begin{center}
	\begin{minipage}{0.95\textwidth}
		\noindent
		\textsc{#1}
				\vspace{5pt}\\
				\setlength{\tabcolsep}{3pt}
				\begin{tabularx}{\textwidth}{@{}lX@{}}
						\textbf{Input:} 		& #2 \\
						\textbf{Question:} 	& #3
					\end{tabularx}
	\end{minipage}
		\end{center}
}

\bibliographystyle{plainnat}

\title{Temporal Connectivity: Coping with Foreseen and Unforeseen Delays}

\author{Eugen~F\"uchsle}{TU Berlin, Faculty IV, Algorithmics and Computational Complexity, Germany}{fuechsle@campus.tu-berlin.de }{}{}

\author{Hendrik~Molter}{Department of Industrial Engineering and Management, Ben-Gurion~University~of~the~Negev, 
Beer-Sheva, 
Israel}{molterh@post.bgu.ac.il}{https://orcid.org/0000-0002-4590-798X}{Supported by the ISF, grant No.~1070/20.}

\author{Rolf~Niedermeier}{TU Berlin, Faculty IV, Algorithmics and Computational Complexity, Germany}{rolf.niedermeier@tu-berlin.de}{https://orcid.org/0000-0003-1703-1236}{}

\author{Malte~Renken}{TU Berlin, Faculty IV, Algorithmics and Computational Complexity, Germany}{m.renken@tu-berlin.de}{https://orcid.org/0000-0002-1450-1901}{Supported by the DFG, project MATE (NI 369/17).}

\authorrunning{Eugen F\"uchsle, Hendrik Molter, Rolf Niedermeier, and Malte Renken} %

\Copyright{Eugen F\"uchsle, Hendrik Molter, Rolf Niedermeier, and Malte Renken} %

\ccsdesc[500]{Theory of computation~Graph algorithms analysis}
\ccsdesc[500]{Theory of computation~Problems, reductions and completeness}
\ccsdesc[500]{Mathematics of computing~Discrete mathematics}

\keywords{Paths and walks in temporal graphs,
static expansions of temporal graphs,
two-player games,
flow computations,
dynamic programming,
PSPACE-completeness}

\category{} %

\relatedversion{} %

\nolinenumbers %

\begin{document}

\maketitle              %

\begin{abstract}
Consider planning a trip in a train network.
In contrast to, say, a road network, the edges are \emph{temporal},
i.e., they are only available at certain times.
Another important difficulty is that trains, unfortunately, sometimes get delayed.
This is especially bad if it causes one to miss subsequent trains.
The best way to prepare against this is to have a connection that is \emph{robust}
to some number of (small) delays.
An important factor in determining the robustness of a connection is how far in advance delays are announced.
	We give polynomial-time algorithms for the two extreme cases: delays known before departure and delays occurring without prior warning (the latter 
leading to a two-player game scenario).
Interestingly, in the latter case, we show that the problem becomes PSPACE-complete if the itinerary is demanded to be a path.
\end{abstract}

\section{Introduction}\label{sec:intro}
Computing \emph{temporal paths} is one of the back-bone algorithmic problems in 
the context of temporal graphs, that is, graphs
whose edges are present only at certain, known points in time~\cite{XuanFJ03,WuCKHHW16EffTempPath}.
Temporal graphs are specified by a set~$V$ of vertices and a set~$E$ of time arcs,
where each time arc~$(v,w,t,\lambda) \in E$ consists of a \emph{start vertex}~$v$, an \emph{end vertex}~$w$, a \emph{time label}~$t$, and a \emph{traversal time}~$\lambda$;
then there is a (direct) connection from~$v$ to~$w$ starting at time~$t$ and arriving at time~$t+\lambda$.
Temporal graphs model numerous 
real-world scenarios~\cite{HotTopicHol15,HotTopicHS19,HotTopicCas+12,HotTopicLVM18}:
Social, communication, transportation, and many other networks are usually not static 
but vary over time.

The added dimension of time causes many aspects of connectivity to behave quite differently from static (i.e., non-temporal) graphs.
Thus, the flow of items through a temporal network has to be time-respecting. 
More specifically, it follows a \emph{temporal walk} (or \emph{path}, if every vertex is visited at most once),
i.e., a sequence of time arcs $(v_i, w_i, t_i, \lambda_i)_{i=1}^{\ell}$
where $v_{i+1} = w_i$ and $t_{i+1} \geq t_i + \lambda_i$ for all $i < \ell$.
While inheriting many properties from static walks, temporal walks exhibit certain characteristics that add a further level of difficulty to algorithmic problems centered around them.
For example, temporal connectivity is not transitive: the existence of a temporal walk from vertex $u$ to $v$ and a temporal walk from $v$ to $w$ does not imply the existence of a temporal walk from $u$ to $w$.
Moreover, the temporal setting naturally leads to several notions of what an ``optimal'' temporal path could be~\cite{BentertHNN20TempWalkWaitingTime,XuanFJ03}.

Computation of temporal paths and walks has already been studied intensively~\cite{WuCKHHW16EffTempPath,XuanFJ03},
including specialized settings that are novel to temporal graphs:
For example, \citet{BentertHNN20TempWalkWaitingTime} and \citet{TemporalPathsUnderWaitingTime}
studied temporal walks and paths that are only allowed to have limited waiting time at any vertex.

In this work, we investigate another natural, inherently temporal connectivity problem. It addresses \emph{delays}.
In many real-world temporal networks such as transport networks (e.g.\ trains, shipping routes),
individual edges may get delayed for various reasons.
Thus it is an important question whether connectivity between a start and a target node is \emph{fragile},
i.e., easily disrupted by delays,
or whether it is \emph{robust}.

An important aspect in this matter is the time at which delays become known.
The earlier they are announced, the easier one can still adapt the chosen route.
In this work, we study the two endpoints of this spectrum:
one where all delays are known up front,
and one where delays occur without any prior warning.

We now briefly describe our models for these two problems,
beginning with the problem variant in which all delays are known up front.
Herein, a \emph{$D$-delayed temporal path} refers to a temporal path that remains valid when the time arcs in $D \subseteq E$ have been delayed by some fixed amount each.
A more formal definition will be given in \cref{sec:prelims}.

\problemdef{\drc{}}
{A temporal graph $\mathcal{G} = (V,E)$, two vertices $s,z \in V$, and $x,\delta \in \mathbb{N}$.}
{Is there, for every delay set~$D \subseteq E$ of size $\abs{D} \leq x$, a $D$-delayed temporal path from~$s$ to~$z$ in~$\GG$?}

In contrast, if the delays occur without prior warning, then the resulting problem is best modeled as a two-player game,
which we call the \drgame{}.
The first player (the \emph{traveler}) starts at vertex~$s$ and has to decide at each turn which time arc they want to traverse next.
The other player (the \emph{adversary}) then gets to decide whether that time arc is delayed or not.
As before, there is a bound~$x$ on the overall number of time arcs that can be delayed.
The traveler wins if they reach the target vertex~$z$.
A \emph{winning} strategy for the traveler is a strategy that guarantees that they will reach their target.

\problemdef{\drg{}}
{A temporal graph $\mathcal{G} = (V,E)$, two vertices $s,z \in V$, and $x,\delta \in \mathbb{N}$.}
{Does the traveler have a winning strategy for the \drgame{}?}

The difference between the two models \drc{} and \drg{} is illustrated in \cref{fig:example_drc_drg}.

\begin{figure}[t]
	\centering
	\tikzstyle{alter}=[circle, minimum size=16pt, draw, inner sep=1pt] 
	\tikzstyle{majarr}=[draw=black]
	
	\begin{tikzpicture}
		\begin{scope}[every node/.style=vertex]
			\node at (0,0) (a) {$s$};
			\node[position=30:8ex from a] (b) {$a$};
			\node[position=-30:8ex from a] (c) {$b$};
			\node[position=-30:8ex from b] (z) {$z$};
		\end{scope}
		
		\begin{scope}[timearc]
			\draw
				(a) edge[edge label={$(1,1)$}] (b)
				(a) edge[edge label={$(1,1)$}] (c)
				(b) edge[edge label={$(2,1)$}] (z)
				(c) edge[edge label={$(2,1)$}] (z)
				;
		\end{scope}	
	\end{tikzpicture}  	
	\caption{
		An example temporal graph, time arcs are labeled by $(\tatime(e), \tatrav(e))$.
	 	For $x=1$ and $\delta = 1$, this is a yes-instance of \drc{}:
		If the time arc from~$s$ to~$a$ is delayed then there is a temporal path from~$s$ to~$z$ via~$b$.
		Otherwise, if that time arc is not delayed, then the temporal path via $a$ is available.
		However, the same setting is a no-instance of \drg{}:
		If the traveler picks the time arc from~$s$ to~$a$, then the adversary will delay it,
		rendering the traveler stuck at~$a$ since they reach it at time~$3$.
		If the traveler picks the time arc from~$s$ to~$b$, then the situation is analogous.
	}
	\label{fig:example_drc_drg}
\end{figure}

Finally, we want to consider a variant of \drg{}, in which the traveler may not visit any vertex more than once.
We refer to this as \drpg{}. %

\subparagraph*{Related Work.} 
There has been extensive research on many other connectivity-related problems on temporal graphs~\cite{BussMNR20,Erlebach0K21,KlobasMMNZ21,staticExp3/DBLP:journals/algorithmica/MertziosMS19,EnrightMMZ19,enright2021assigning,FluschnikMNRZ20,staticExp1:journals/jcss/ZschocheFMN20,staticExp2/DBLP:journals/jcss/KempeKK02}.
Delays in temporal graphs have been considered 
in terms of manipulating
reachability sets~\cite{DBLP:conf/aaai/DeligkasP20,DBLP:journals/corr/MolTerRenkenZschoche}. 
An individual delay operation considered in the mentioned work delays a single time arc and is similar to our notion. %
Typically, the computational problems in this context are NP-hard and can be also considered as computing ``robustness measures'' for the connectivity in temporal graphs.

In companion work \cite{DelayRobustRoutes} we investigate a
problem located somewhat ``between'' \drc{} and \drg{}
in which the delays become known after the sequence of vertices to be traversed from~$s$ to~$z$ is fixed,
but before the exact time arcs to be traversed are chosen.
There, we show that this problem is NP-hard and further study its parameterized complexity.

\subparagraph*{Our Contribution.}
We introduce two computational problems related to testing connectivity between two terminal vertices in the presence of delays.
We give polynomial-time algorithms for \drc{} (\cref{ch:drc}) and \drg{} (\cref{sec:drg_poly_alg}),
but prove PSPACE-completeness for \drpg{} (\cref{sec:pspacehardness,sec:drpg_pspace_hard}), the variant of the second problem in which no vertex may be visited twice.

\section{Preliminaries}
\label{sec:prelims}
We abbreviate~$\{1, 2, \dots, n\}$ as $[n]$ and $\{n, n+1, \dots, m\}$ as $[n, m]$.
The Iverson bracket~$[P]$ is~$1$ if property~$P$ holds and~$0$~otherwise.
For a time arc~$e = (v, w, t, \lambda_e)$,
we denote the starting and ending vertices as $\tastart(e) = v$ and $\taend(e) = w$, the time label as $\tatime(e) = t$, and the traversal time as $\tatrav(e) = \lambda_e$.
For a vertex~$v$, $\tau_v$~denotes the set of time steps where $v$~has incoming or outgoing time arcs.

\subparagraph{Delays.}
When a time arc~$e$ gets delayed,
then its traversal time~$\lambda(e)$ is increased by~some fixed amount~$\delta$.
For a given set~$D \subseteq E$ of \emph{delayed arcs},
a sequence of time arcs $(v_i, w_i, t_i, \lambda_i)_{i=1}^\ell$ is called
a \emph{$D$-delayed temporal walk}
if it is a temporal walk in the temporal graph obtained from~$\GG$
by applying delays to all time arcs in~$D$.
(We omit~$D$ when it is clear from the context.)

As an example consider the temporal walk $(a, b, 1, 1), (b, c, 3, 1$):
\begin{center}
\begin{tikzpicture}
	\begin{scope}[every node/.style=vertex]
		\node at (0,0) (a) {$a$};
		\node[position=0:8ex from a] (b) {$b$};
		\node[position=0:8ex from b] (c) {$c$};
	\end{scope}
	\draw[timearc]
		(a) edge[edge label={$(1,1)$}] (b)
		(b) edge[edge label={$(3,1)$}] (c);
\end{tikzpicture}
\end{center}
When delaying the first time arc by~1, i.e.\ having $\delta = 1$ and $D = \{(a, b, 1, 1)\}$,
then this is also a delayed temporal walk:
Due to the delay, the first time arc arrives in $b$ at time step $2+\delta=3$ which is still not later than the departure of the second time arc. 
However, if we instead set~$\delta = 2$ and 
$D = \{(a, b, 1, 1)\}$,
then it is no longer a delayed temporal walk,
because the first time arc only reaches~$b$ at time~$4$.

Clearly, from any temporal walk one can obtain a temporal path by eliminating all circular subwalks.
Thus, for any delay set~$D$, if there is a $D$-delayed temporal walk from~$s$ to~$z$,
then there is also a $D$-delayed temporal path.
This is the reason why we did not define a separate version of \drc{} for temporal walks.

Note that this equivalence does not extend to \drg{} as \cref{fig:example_drg_drpg} proves.

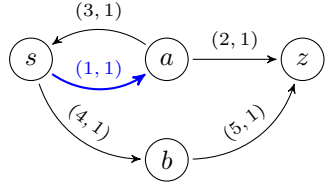
\begin{figure}[t]
	\centering
	\begin{tikzpicture}
		\begin{scope}[every node/.style=vertex]
			\node at (0,0) (s) {$s$};
			\node[position=0:8ex from s] (v1) {$a$};
			\node[below = 5ex of v1] (v2) {$b$};
			\node[right = 8ex of v1] (z) {$z$};
		\end{scope}

		\draw[timearc, thick, blue] (s)[bend left=-35] edge[edge label={$(1,1)$}] (v1);
		\draw[timearc]
			(v1)[] edge[edge label={$(2,1)$}] (z)
			(v1)[bend left=-35] edge[edge label={$(3,1)$}] (s)
			(s)[bend left=-35] edge[edge label={$(4,1)$}] (v2)
			(v2)[bend left=-35] edge[edge label={$(5,1)$}] (z)
			;
    \end{tikzpicture}   
	\caption{
		A temporal graph which, for $x = \delta = 1$, forms a yes-instance of \drg{} but a no-instance of \drpg{}.
		In the former, a winning strategy for the traveler is to take the thick blue time arc to $a$. 
		If it is not delayed, then they can directly go to $z$.
		If it is delayed, then there are no remaining delays.
		Thus, after returning to $s$, the target vertex $z$ can be reached via $b$.
	}
    \label{fig:example_drg_drpg}
\end{figure}

\subparagraph{Static expansion.}
Sometimes, a problem on a temporal graph $\GG$ is transformed to problems on a non-temporal ``time-expanded'' graph,
a \emph{static expansion} of~$\GG$.
The idea is to model each temporal occurrence of every vertex in the temporal graph as a distinct vertex of the static expansion.

Formally, we say that a digraph~$H = (W, A)$ is a \emph{static expansion} of the temporal graph~$\GG = (V, E)$ if
\begin{enumerate}[(i)]
\item $\displaystyle\bigcup_{v \in V}(\{v\} \times \tau_v) \subseteq W \subseteq V \times \NN$, and
\item $A = A_1 \cup A_2$ with
	\begin{align*}
		A_1 &= \left\{ (v, t), (v, t') \mvert (v, t) \in W \land t' = \argmin_{t'' > t}\{ (v, t'') \in W\} \right\}, \\
		A_2 &= \left\{ \left((v, t), (w, t + \lambda)\right) \mvert (v, w, t, \lambda) \in E \right\}.
	\end{align*}
\end{enumerate}
The arcs in $A_1$ are often called \emph{waiting arcs}
while the arcs in $A_2$ are in one-to-one correspondence to the time arcs of~$\GG$.
The static expansion with the minimal set of vertices is called the \emph{reduced static expansion}.

The main virtue of static expansions is that they model temporal walks as (non-temporal) paths.
More precisely, we have the following lemma.
\begin{lemma}
	\label{lemma:static_exp_walk_preserving}
	If $(v, t)$ and $(w, u)$ are two vertices of a static expansion~$H$ of~$\GG$,
	then there is a path from $(v, t)$ to $(w, u)$ if and only if~$\GG$ contains a temporal walk from~$v$ to~$w$
	which starts at time~$t$ or later and arrives at time~$u$ or earlier.
\end{lemma}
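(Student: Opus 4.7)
The plan is to prove the two directions of the biconditional separately, in both cases exploiting the decomposition $A = A_1 \cup A_2$: arcs in $A_1$ (waiting arcs) keep the vertex component fixed and strictly increase the time component, whereas arcs in $A_2$ are in bijection with the time arcs of $\GG$ via $((a, t_a), (b, t_a + \lambda)) \leftrightarrow (a, b, t_a, \lambda)$.

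For the forward direction, I would take a path $P$ in $H$ from $(v, t)$ to $(w, u)$ and extract the subsequence of arcs of $P$ belonging to $A_2$. Any two consecutive $A_2$-arcs along $P$ are separated by a (possibly empty) block of waiting arcs at a single vertex, so the end vertex of one time arc equals the start vertex of the next, and the time coordinate only increases along the separating block; hence the departure time of the next time arc is at least the arrival time of the previous, \ie the extracted subsequence is a temporal walk from~$v$ to~$w$. Its start time is at least $t$ and its end time is at most $u$, again because waiting arcs only increase the time coordinate along $P$. The degenerate case in which $P$ contains no $A_2$-arc forces $v = w$ and $t \leq u$, corresponding to the empty temporal walk.

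For the backward direction, given a temporal walk $(v_i, w_i, t_i, \lambda_i)_{i=1}^{\ell}$ from~$v$ to~$w$ with $t_1 \geq t$ and $t_\ell + \lambda_\ell \leq u$, I would build a path in $H$ in $\ell + 1$ blocks: for $i = 1, \ldots, \ell$, the $i$-th block consists of waiting arcs from the current position (initially $(v, t)$, later $(w_{i-1}, t_{i-1} + \lambda_{i-1})$) to $(v_i, t_i)$, followed by the single $A_2$-arc to $(w_i, t_i + \lambda_i)$; the final block consists of waiting arcs from $(w_\ell, t_\ell + \lambda_\ell)$ to $(w, u)$, and the degenerate case $\ell = 0$ reduces to a single such waiting block. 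The main technical point is to verify that each intended waiting block is actually realizable by a chain of $A_1$-arcs in $H$. This follows by a straightforward induction: since each $A_1$-arc out of $(a, \tau) \in W$ leads to the immediately next $(a, \tau') \in W$ with $\tau' > \tau$, one can reach any $(a, \tau^*) \in W$ with $\tau^* \geq \tau$ from $(a, \tau)$ via waiting arcs. The required intermediate vertices $(v_i, t_i)$ and $(w_i, t_i + \lambda_i)$ all lie in $W$ because the corresponding time arcs of~$\GG$ force $t_i \in \tau_{v_i}$ and $t_i + \lambda_i \in \tau_{w_i}$, so the construction goes through.
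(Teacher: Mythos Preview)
The paper explicitly omits the proof of this lemma as folklore, so there is no authors' argument to compare against. Your proof is the standard one and is correct.

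One small wrinkle worth stating explicitly: in the backward direction, what your construction produces is a priori only a \emph{walk} in~$H$, not a path. If~$\GG$ admits zero traversal times, the given temporal walk may traverse a cycle at a fixed time step---for instance $(a,b,5,0)$ followed by $(b,a,5,0)$---and then your sequence in~$H$ revisits the vertex~$(a,5)$. This is harmless, since any walk in a digraph contains a path between the same endpoints (shortcut the cycles), but you should say ``walk'' rather than ``path'' in the construction and add this one-line remark at the end.
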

The proof of \cref{lemma:static_exp_walk_preserving} is folklore; we omit it, as well as the proof of the following easy observation.
\begin{observation}
	\label{lemma:static_exp_deletion}
	If $H = (W, A)$ is a static expansion of~$\GG = (V, E)$,
	$E' \subseteq E$ is a set of time arcs and $A' \subseteq A$ the set of arcs corresponding to $E'$,
	then $(W, A \setminus A')$ is a static expansion of~$(V, E \setminus E')$.
\end{observation}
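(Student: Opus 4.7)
The plan is to directly verify both defining conditions of a static expansion for the pair $(W, A \setminus A')$ with respect to the reduced time arc set $E \setminus E'$, making the argument essentially bookkeeping.

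For condition (i), observe that $W \subseteq V \times \NN$ is unchanged, so that part is immediate. For the required inclusion $\bigcup_{v \in V}(\{v\} \times \tau'_v) \subseteq W$ (where $\tau'_v$ is taken with respect to the reduced edge set), I would note that removing time arcs can only shrink the set of times at which $v$ has incident time arcs, so $\tau'_v \subseteq \tau_v$. Hence the old inclusion implies the new one.

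For condition (ii), I would split the arc set into the waiting arcs $A_1$ and the time-arc arcs $A_2$. The set $A_1$ depends only on $W$ via its defining formula, and since $W$ is unchanged, the waiting arcs of the new static expansion coincide with $A_1$. For the new $A_2$-part, one just notes that it is in one-to-one correspondence with $E \setminus E'$, which by the definition of $A'$ is exactly $A_2 \setminus A'$. Combining these two, the arcs of the candidate static expansion are $A_1 \cup (A_2 \setminus A')$.

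It remains to identify $A_1 \cup (A_2 \setminus A')$ with $A \setminus A'$. This requires $A_1 \cap A' = \emptyset$, which is the only real (though trivial) obstacle: it follows from $A' \subseteq A_2$ together with $A_1 \cap A_2 = \emptyset$, the latter holding because arcs in $A_1$ have the same $V$-component on both endpoints while arcs in $A_2$ do not (as $\tastart(e)$ and $\taend(e)$ are distinct for every time arc $e$). Assembling these pieces finishes the verification.
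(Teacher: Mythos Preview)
The paper does not give a proof of this observation at all; it explicitly omits it as ``easy.'' Your direct verification of the two defining conditions is correct and is exactly the kind of bookkeeping the authors had in mind. The only implicit assumption you invoke is that $\tastart(e)\neq\taend(e)$ for every time arc~$e$ (to get $A_1\cap A_2=\emptyset$); the paper never states this explicitly, but it is the standard convention, and without it the observation itself---not just your argument---would fail.
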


\section{\drc{}} \label{ch:drc}

In this section, we present an algorithm (\cref{alg:delay_robust_connection}) which solves \drc{} in polynomial time.
The core idea is to reduce the problem to the computation of a maximum flow problem in a static expansion.
There are three steps in this algorithm. 
First, we construct a new temporal graph~$\GG^*$ in which the removal of a time arc 
has the same effect as delaying the respective time arc in the original input graph~$\GG = (V,E)$. 
Second, we construct a static expansion~$H$ of~$\GG^*$.
Finally, we compute the maximum flow from the start vertex~$s$ to the target vertex~$z$ in~$H$.
We will show that the value of this flow equals the number of delays required to break temporal connectivity between~$s$~and~$z$ in~$\GG$.

For the first step, define $\GG^* = (V, E \cup E^*)$
where
$E^* = \{(v, w, t, \lambda + \delta) \mid (v, w, t, \lambda) \in E\}$. 
Then we can observe the following property of~$\GG^*$.

\begin{lemma} \label{lemma:delay_temp_graph_walk_equiv}
	Let $\GG=(V,E)$ be a temporal graph, $s, z \in V$, $D \subseteq E$,
	and $\GG^* = (V, E \cup E^*)$ as defined above.
	There is a $D$-delayed temporal $(s,z)$-walk in~$\GG$ if and only if there is a temporal $(s,z)$-walk in~$\GG^*_D := (V, (E \setminus D) \cup E^*)$.
\end{lemma}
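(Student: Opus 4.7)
The plan is to prove both implications by directly translating walks between the $D$-delayed version of $\GG$ and $\GG^*_D$. Let me write $\GG_D$ for the temporal graph obtained from $\GG$ by replacing every $(v,w,t,\lambda) \in D$ with $(v,w,t,\lambda+\delta)$; by definition, a $D$-delayed walk in $\GG$ is just a temporal walk in $\GG_D$.

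For the ``only if'' direction, I would start from a temporal $(s,z)$-walk in $\GG_D$ and produce a walk in $\GG^*_D$ arc by arc. Each arc coming from $E \setminus D$ is present verbatim in both graphs, so it is kept. Each arc coming from $D$, which after delaying appears as $(v,w,t,\lambda+\delta)$ in $\GG_D$, is by definition also an element of $E^*$ and hence of~$\GG^*_D$. The resulting sequence has identical time labels and traversal times, so it remains time-respecting; this direction is essentially a relabeling.

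The ``if'' direction is the more delicate part, and I expect it to be the main (mild) obstacle. Given a temporal $(s,z)$-walk $(e_1, \dots, e_\ell)$ in $\GG^*_D$, I need to build one in $\GG_D$. Arcs $e_i \in E \setminus D$ are kept. For an arc $e_i \in E^*$, it is the ``delayed twin'' $(v,w,t,\lambda+\delta)$ of some $e = (v,w,t,\lambda) \in E$; I then distinguish two cases. If $e \in D$, then $\GG_D$ contains exactly $(v,w,t,\lambda+\delta)$, so $e_i$ is kept. If $e \notin D$, then $\GG_D$ only contains the original $e$ with the shorter traversal time $\lambda$, and I replace $e_i$ by $e$ in the walk. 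The verification to carry out is that the resulting sequence is still time-respecting: the only effect of such a replacement is that the arrival time at $w$ decreases by $\delta$, while the time-respecting constraint in the original $\GG^*_D$-walk already gave $t_{i+1} \ge t_i + (\lambda + \delta) \ge t_i + \lambda$, which is exactly what is needed for the shortened arc.

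Performing these replacements for every index $i$ simultaneously yields a walk all of whose arcs lie in $\GG_D$, completing the backward direction. Apart from the case analysis in the backward direction, both directions amount to straightforward arc-by-arc translation, and no more intricate argument is needed.
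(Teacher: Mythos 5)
Your proof is correct and takes essentially the same approach as the paper: both directions are the same arc-by-arc translation, and your explicit case split on whether $e \in D$ in the backward direction corresponds exactly to the paper's Iverson-bracket formulation $[\hat e_\ell \in D]\cdot\delta$. Reformulating the $D$-delayed walk as a plain temporal walk in the auxiliary graph $\GG_D$ is a clean way to organize the same argument.
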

\begin{proof}
	~\vspace{-\baselineskip} %
	\proofsubparagraph{($\Rightarrow$):} Let $W = (e_1, e_2, \ldots, e_k)$ be a $D$-delayed walk from $s$ to $z$ in $\GG$
	with $e_i = (v_i, w_i, t_i, \delta_i)$ for $i \in [k]$.
	This means that $v_1 = s$, $w_k = z$, and for all~$\ell \in [k-1]$ it holds that $w_\ell = v_{\ell+1}$ and $t_\ell + \lambda_\ell + [e_\ell \in D] \cdot \delta \le t_{\ell+1}$.
	We construct the temporal walk $\hat{W} = (\hat{e}_1, \hat{e}_2, \ldots, \hat{e}_k)$ in $\GG^*_D$ with 
	$
		\hat{e}_i = (v_i, w_i, t_i, \lambda_i + [e_i \in D]\cdot \delta)
	$
	for $i \in [k]$.
	Then
	$
		t(\hat{e}_\ell) + \lambda(\hat{e_\ell}) \le t(\hat{e}_{\ell+1})
	$
	holds for all $\ell \in [k-1]$,
	thus $\hat{W}$ is a temporal walk from~$s$ to $z$ in $\GG^*_D$.

	\proofsubparagraph{($\Leftarrow$):} Let $W = (e_1, e_2, \ldots, e_k)$ be a temporal $(s,z)$-walk in $\GG^*_D$ with~$e_i = (v_i, w_i, t_i, \lambda_i)$.
	This means that $v_1 = s, w_k = z$ and for all $\ell \in [k-1]$, it holds that $w_\ell = v_{\ell+1}$ and $t_\ell + \lambda_\ell \le t_{\ell+1}$.
	We construct a delayed temporal walk $\hat{W} = (\hat{e}_1, \hat{e}_2, \ldots, \hat{e}_k)$ in $\GG$ for the delay set~$D$ with 
	\[
	\hat{e}_i = 
	\begin{cases}
		e_i, 	&\text{if } e_i \in E \setminus D \\
		(v_i, w_i, t_i, \lambda_i - \delta), &\text{otherwise}
	\end{cases}
	\]
	for $i \in [k]$. 	
	Note that $\hat{e}_i \in E$: 
	If $e_i \notin E \setminus D$, then
	$e_i \in E^*$, and thus $\hat{e}_i = (v_i, w_i, t_i, \lambda_i - \delta) \in E$ by construction of $E^*$.	
	We then have for all $\ell \in [k-1]$ that
	\[
		t(\hat{e}_\ell) + \lambda(\hat{e}_\ell) + [\hat{e}_\ell \in D] \cdot \delta
		\le t(e_\ell) + \lambda(e_\ell)
		\le	t(e_{\ell+1})
		= t(\hat{e}_{\ell+1})
	\]
	holds,
	hence $\hat{W}$ is a $D$-delayed temporal walk in~$\GG$.
\end{proof}

The algorithm for \drc{} is given as pseudo-code in \cref{alg:delay_robust_connection}.
\begin{algorithm}[t]
	\caption{\drc{}} \label{alg:delay_robust_connection}
	\begin{algorithmic}[1]
		\Input{$\GG = (V,E), s,z \in V, x,\delta \in \NN$}
		\Output{Is $(\GG, s,z,x,\delta)$ a yes-instance of \drc{}?}
		\State $H \gets \text{reduced static expansion of } \GG^*$
		\State Define $c: E(H) \to \{1, \infty\}$ by $c(e) = \begin{cases} 1 & \text{if $e$ corresponds to a time arc in~$E$} \\ \infty & \text{otherwise} \end{cases}$
		\State $T \gets \argmax_t\{(z, t) \in V(H)\}$
		\State Compute the value~$f$ of a maximum flow from~$(s,0)$ to~$(z,T)$ in~$H$ with capacities~$c$
		\If{$f \le x$}		
		\State \textbf{return} ``NO''
		\Else 
		\State \textbf{return} ``YES''
		\EndIf
	\end{algorithmic}
\end{algorithm}
See also \cref{fig:temp_graph_w_static_expansion} for an illustration of the graph~$H$.
We now show the correctness of the algorithm.

\begin{lemma} \label{lemma:drc-correctness}
	\cref{alg:delay_robust_connection} solves \drc{}. 
\end{lemma}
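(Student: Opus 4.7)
The plan is to chain \cref{lemma:delay_temp_graph_walk_equiv}, \cref{lemma:static_exp_walk_preserving}, and \cref{lemma:static_exp_deletion} together with the max-flow/min-cut theorem applied to the capacitated graph $H$ constructed by the algorithm. First I would reduce the problem: by \cref{lemma:delay_temp_graph_walk_equiv}, the given tuple is a yes-instance of \drc{} if and only if for every $D \subseteq E$ with $\abs{D} \le x$ there is a temporal $(s,z)$-walk in $\GG^*_D = (V, (E \setminus D) \cup E^*)$. Letting $A_D \subseteq A$ denote the set of arcs of $H$ corresponding to the time arcs in $D$, \cref{lemma:static_exp_deletion} implies that $H \setminus A_D$ is a static expansion of $\GG^*_D$, and \cref{lemma:static_exp_walk_preserving} then translates temporal walks into paths: a temporal $(s,z)$-walk in $\GG^*_D$ exists iff $H \setminus A_D$ contains a path from $(s,0)$ to $(z,T)$. (Strictly speaking, $(s,0)$ and $(z,T)$ need to lie in $W$; this can be ensured by the standard augmentation of $H$ with dummy copies connected via $\infty$-capacity waiting arcs, which does not affect any min-cut value.) Consequently, the instance is a no-instance if and only if some $D \subseteq E$ with $\abs{D} \le x$ disconnects $(s,0)$ from $(z,T)$ in $H$.

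The second step is the flow-theoretic argument. Let $d$ denote the smallest $\abs{D}$ for which $D \subseteq E$ and $H \setminus A_D$ contains no $((s,0),(z,T))$-path; I want to prove $d = f$. On one hand, for every such minimal $D$ the arc set $A_D$ itself is an $((s,0),(z,T))$-cut of capacity $\abs{D}$ (since each arc in $A_D$ has unit capacity), so $f \le d$ by max-flow/min-cut. On the other hand, again by max-flow/min-cut, there is a cut $C$ in $H$ of capacity exactly $f$; because waiting arcs and arcs corresponding to $E^*$-time arcs have capacity $\infty$, a finite-capacity cut contains none of them, so $C$ consists entirely of unit-capacity arcs corresponding to some $D \subseteq E$ with $\abs{D} = f$, giving $d \le f$. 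Combining both directions, the instance is a yes-instance iff $d > x$ iff $f > x$, which is precisely what \cref{alg:delay_robust_connection} outputs.

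The main obstacle I anticipate is justifying the min-cut characterization, namely that no minimum $((s,0),(z,T))$-cut can use an arc of infinite capacity. This relies on exhibiting some finite-capacity cut at all — for instance, the collection of \emph{all} unit-capacity arcs in $H$, which separates $(s,0)$ from $(z,T)$ because any path through $H$ that uses only waiting arcs and $E^*$-arcs corresponds, via \cref{lemma:static_exp_walk_preserving}, to an $E^*$-only temporal walk, and if such a walk from $s$ to $z$ exists then the instance is trivially a yes-instance and $d = f = \infty$ is handled separately. A minor bookkeeping point is that the algorithm needs $(s,0)$ and $(z,T)$ as actual vertices of $H$, which is handled by the standard dummy-vertex convention mentioned above; once this is set up, the remainder is a direct application of max-flow/min-cut and the preceding lemmas.
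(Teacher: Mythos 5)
Your proposal is correct and follows essentially the same approach as the paper's proof: chain \cref{lemma:delay_temp_graph_walk_equiv}, \cref{lemma:static_exp_walk_preserving}, and \cref{lemma:static_exp_deletion} to turn the problem into a vertex-pair disconnection question in the capacitated static expansion $H$, then invoke the Max-Flow-Min-Cut theorem (with the $\infty$-capacity on waiting arcs and $E^*$-arcs ensuring min cuts use only $E$-arcs). The paper states the cut characterization a bit more compactly, while you spell out both inequalities $f \le d$ and $d \le f$ separately and flag the bookkeeping about $(s,0),(z,T)$ lying in $W$ and the possibly infinite-flow case, but these are the same argument with extra care rather than a different route.
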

\begin{proof}
	By \cref{lemma:delay_temp_graph_walk_equiv}, the given instance of \drc{} is a no-instance if and only if
	there exists a set~$D \subseteq E$ of size~$\abs{D} \leq x$ such that $\GG^*_D$ contains no temporal $(s,z)$-walk.
	By \cref{lemma:static_exp_walk_preserving} and \cref{lemma:static_exp_deletion},
	this is equivalent of there being a set~$D' \subseteq E(H)$ of edges in the static expansion~$H$~of~$\GG^*$
	such that
	\begin{enumerate}[(i)]
	\item \label{prop1} $\abs{D'} \leq x$ and the edges in~$D'$ all correspond to edges in~$E$, and
	\item \label{prop2} $H - D'$ contains no walk from $(s, 0)$ to~$(z, T)$ 
	where $T$ is the largest integer for which $H$ contains the vertex~$(z, T)$.
	\end{enumerate}
	Note that \ref{prop2} is equivalent to $D'$ forming a cut set that separates~$(s, 0)$ and $(z, T)$.
	Also, by definition of the capacity function~$c$, \ref{prop1} is equivalent to the total capacity $\sum_{e \in D'} c(e)$ being at most~$x$.
	
	Therefore, by the Max-Flow-Min-Cut-Theorem \cite{MaxFlowMinCut1ford1956maximal,MaxFlowMinCut2:journals/tit/EliasFS56},
	the given instance is a no-instance if and only if the maximum flow from~$(s,0)$ to $(z, T)$ in the graph~$H$ with edge capacities~$c$ is at most~$x$.
\end{proof}

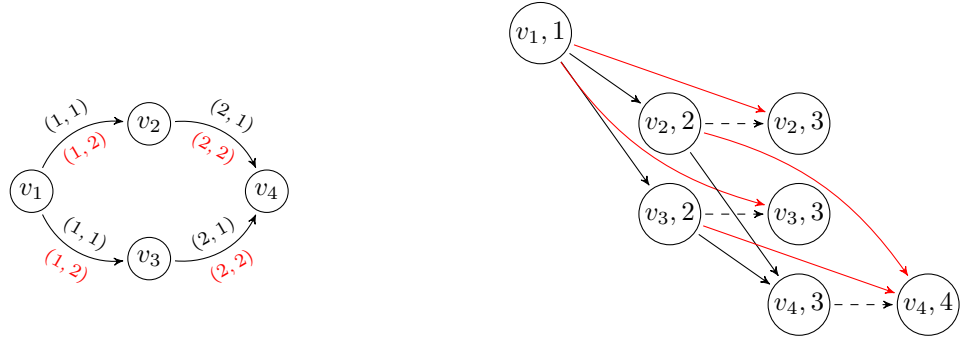
\begin{figure}[t]
	\begin{subfigure}[t]{.45\textwidth}
		\centering
		\begin{tikzpicture}
				\begin{scope}[every node/.style=vertex]
				\node at (0,0) (v1) {$v_1$};
				\node[position=30:8ex from v1] (v2) {$v_2$};
				\node[position=-30:8ex from v1] (v3) {$v_3$};
				\node[position=-30:8ex from v2] (v4) {$v_4$};
			\end{scope}
			
			\draw[timearc] (v1)[bend left=35] edge[edge label={$(1,1)$}, edge label'={\color{red}$(1,2)$}] (v2);
			\draw[timearc] (v1)[bend left=-35] edge[edge label={$(1,1)$}, edge label'={\color{red}$(1,2)$}] (v3);
			\draw[timearc] (v2)[bend left=35] edge[edge label={$(2,1)$}, edge label'={\color{red}$(2,2)$}] (v4);	
			\draw[timearc] (v3)[bend left=-35] edge[edge label={$(2,1)$}, edge label'={\color{red}$(2,2)$}] (v4);
			
			\node[] at (0,-1.8) (invisible) {}; %
		\end{tikzpicture}
		\caption{The temporal graph~$\GG^*$. The black part of the figure shows~$\GG$, while the red time arcs are in~$E^*$.}
	\end{subfigure}  
	\hfill
	\begin{subfigure}[t]{.45\textwidth}
		\centering
		\begin{tikzpicture}[yscale=1]
		
			\begin{scope}[every node/.style=vertex]
			\node at (0.0, 0.0) (v11) {$v_{1},1$};
			\node at (1.7, -1.2) (v22) {$v_{2},2$};
			\node at (3.4, -1.2) (v23) {$v_{2},3$};
			\node at (1.7, -2.4) (v32) {$v_{3},2$};
			\node at (3.4, -2.4) (v33) {$v_{3},3$};
			\node at (3.4, -3.6) (v43) {$v_{4},3$};
			\node at (5.1, -3.6) (v44) {$v_{4},4$};
			\end{scope}

			\draw[timearc] (v11) edge (v22);
			\draw[timearc] (v11) edge (v32);			
			
			\draw[timearc,color=red] (v11) edge (v23);
			\draw[timearc,color=red] [bend left=-20] (v11) edge (v33);	
			
			\draw[timearc,dashed] (v22) edge (v23);
			\draw[timearc,dashed] (v32) edge (v33);	
			
			\draw[timearc] (v22) edge (v43);
			\draw[timearc] (v32) edge (v43);
			
			\draw[timearc,color=red] [bend left=20] (v22) edge (v44);
			\draw[timearc,color=red] (v32) edge (v44);
			
			\draw[timearc,dashed] (v43) edge (v44);

		\end{tikzpicture}
		\caption{The reduced static expansion of~$\GG^*$. The dashed arcs are waiting arcs and the black part forms a static expansion of~$\GG$.
		Note that the capacity function~$c$ assigns~$1$ exactly to the solid black edges.}
	\end{subfigure}
	\caption{Example depiction of $\GG^*$ and its reduced static expansion ($\delta=1$).
}
	\label{fig:temp_graph_w_static_expansion}
\end{figure}

Next, we analyze \cref{alg:delay_robust_connection}'s running time.

\begin{lemma}\label{lemma:drc_running_time}
	\cref{alg:delay_robust_connection} has a running time of $\bigO(\abs{E}^2)$.
\end{lemma}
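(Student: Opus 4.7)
The plan is to bound each step of \cref{alg:delay_robust_connection} separately, where the maximum flow computation will dominate.

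First, I would observe that $\GG^*$ has exactly $2|E|$ time arcs and can be built in $O(|E|)$ time by duplicating each time arc. For the reduced static expansion~$H$, each time arc of $\GG^*$ contributes at most two required vertices (one at its starting time, one at its arrival time), so $|V(H)| \le 2|E(\GG^*)| = 4|E|$. The arc set $A(H) = A_1 \cup A_2$ contains at most one waiting arc per vertex and exactly $|E(\GG^*)| = 2|E|$ arcs from $A_2$, giving $|A(H)| \in O(|E|)$. After grouping time arcs by their endpoints and sorting the timestamps per vertex (which costs $O(|E|\log|E|)$), $H$ can be assembled in $O(|E|)$ time; determining $T$ is a single pass over the vertices of $H$.

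The main step is the max-flow computation. Here I would use the key observation that the cut consisting of \emph{all} arcs of $A_2$ that correspond to time arcs in~$E$ (there are exactly $|E|$ such arcs, each of capacity~$1$) separates $(s,0)$ from $(z,T)$, since every $(s,0)$-to-$(z,T)$ path must traverse at least one arc that corresponds to a time arc from $\GG$ (the $E^*$-arcs alone cannot realize a nontrivial temporal walk on their own\,---\,more simply, any $(s,0)$-to-$(z,T)$ path must start with some arc from $A_2$, which corresponds either to $E$ or to $E^*$, and in the latter case one can observe that a cut consisting of the capacity-$1$ arcs has value $|E|$). Hence the maximum flow value $f$ satisfies $f \le |E|$. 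Running Ford--Fulkerson with integer capacities, each augmenting-path search runs in $O(|V(H)| + |A(H)|) = O(|E|)$ time via BFS/DFS on the residual graph, and since every augmentation increases the integral flow by at least~$1$, at most $|E|$ iterations suffice. This yields $O(|E|^2)$ for the max-flow step. (Alternatively, as we only need to distinguish $f \le x$ from $f > x$, one may stop after at most $x+1$ augmentations, but the bound $|E|^2$ already suffices.)

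The hard part is the bookkeeping to justify $|V(H)|, |A(H)| \in O(|E|)$ rigorously; the remaining ingredients (Ford--Fulkerson runtime, integral max flow bounded by the $|E|$ unit-capacity cut) are standard. Summing up, the construction of $\GG^*$ and $H$ takes $O(|E|\log|E|)$ time, the flow computation takes $O(|E|^2)$ time, and the final comparison is $O(1)$, giving an overall running time of $O(|E|^2)$.
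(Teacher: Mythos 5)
The main argument in your max-flow step has a genuine gap. You claim that the $|E|$ capacity-$1$ arcs (those in $A_2$ corresponding to time arcs in~$E$) form an $(s,0)$-$(z,T)$ cut, and hence $f \le |E|$. This is false in general: the arcs in $E^*$ together with waiting arcs can already form a complete path from $(s,0)$ to $(z,T)$, since $E^*$ is simply a copy of~$E$ with larger traversal times and therefore supports temporal walks on its own. In that case removing the capacity-$1$ arcs does not disconnect the terminals, the proposed set is not a cut, and the maximum flow is infinite. (Intuitively this happens exactly when there is a temporal $(s,z)$-walk that survives arbitrary delays, which is a perfectly possible yes-instance.) Your supporting claim that ``the $E^*$-arcs alone cannot realize a nontrivial temporal walk on their own'' is where the argument goes wrong.

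The parenthetical remark you relegate to an aside is in fact the necessary fix, and it is also the route the paper takes: since the algorithm only needs to distinguish $f \le x$ from $f > x$, one runs Ford--Fulkerson and stops after at most $x+1$ augmentations. To turn this into the stated $O(|E|^2)$ bound you additionally need the (easy but essential) observation that one may assume $x \le |E|$, because $D \subseteq E$ so delay budgets beyond $|E|$ are never used; this assumption is what the paper explicitly invokes and is missing from your write-up. With it, the flow step costs $O((x+1)\cdot|E|) \subseteq O(|E|^2)$, and the rest of your accounting for $|V(H)|, |A(H)| \in O(|E|)$ is fine (the paper even avoids the $O(|E|\log|E|)$ sort, though that term is dominated anyway). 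So the conclusion is reachable, but you should promote the ``stop after $x+1$ augmentations'' idea from an optional aside to the actual argument, add the $x \le |E|$ justification, and drop the incorrect cut bound.
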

\begin{proof}
	In the first step, the algorithm constructs the reduced static expansion $H$ of the temporal graph $\GG^*$.
	The size of~$H$ is in~$\bigO(\abs{E})$ and the construction can be done in time linear to its size.
	Constructing~$c$ and~$T$ can also be done in $\bigO(\abs{E})$~time.
	
	Next, we need to compute the value of a maximum flow in~$H$.
	Actually, it suffices to only test whether that value exceeds~$x$ (and we may assume $x \leq \abs{E}$).
	This test is possible in~$\bigO(\abs{E}^2)$~time,
	for example by using the classic method of Ford \& Fulkerson~\cite{MaxFlowMinCut1ford1956maximal}.
\end{proof}

Finally, \cref{lemma:drc-correctness,lemma:drc_running_time} give us the following theorem.

\begin{theorem}
	\drc{} can be solved in quadratic time.
\end{theorem}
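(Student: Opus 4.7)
The plan is to derive the theorem as an immediate consequence of the two preceding lemmas, which together cover correctness and running time of \cref{alg:delay_robust_connection}. Concretely, I would first invoke \cref{lemma:drc-correctness} to conclude that \cref{alg:delay_robust_connection} answers ``YES'' on an instance $(\GG,s,z,x,\delta)$ of \drc{} if and only if $(\GG,s,z,x,\delta)$ is a yes-instance. Then I would invoke \cref{lemma:drc_running_time} to bound the running time of the algorithm by $\bigO(\abs{E}^2)$, which is quadratic in the size of the input. Chaining these two statements yields the theorem.

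The only subtle point worth mentioning is the implicit assumption~$x \leq \abs{E}$ used in the running-time analysis: if~$x > \abs{E}$, then the answer is trivially ``NO'' because even delaying every single time arc uses at most $\abs{E}$ delays, so the maximum flow cannot exceed $\abs{E} \leq x$; this case can be detected in linear time and does not change the overall quadratic bound. I would also note that the input size is~$\Theta(\abs{V} + \abs{E})$, and since we may assume without loss of generality that every vertex is incident to at least one time arc (isolated vertices can be removed in a preprocessing step), the bound~$\bigO(\abs{E}^2)$ is indeed quadratic in the input size.

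Since the two lemmas do all the work, there is no real obstacle to overcome here; the proof is simply a short combination of \cref{lemma:drc-correctness,lemma:drc_running_time}. The hardest part, if anything, is making sure the caveat about trivially large~$x$ is handled cleanly so that the quadratic bound on the overall procedure is unconditional.
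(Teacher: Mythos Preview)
Your approach matches the paper's exactly: the theorem is obtained directly by combining \cref{lemma:drc-correctness,lemma:drc_running_time}, and the paper says nothing more than that.

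One correction to your side remark, though: the case $x > \abs{E}$ does \emph{not} make the answer trivially ``NO''. Delaying a time arc only increases its traversal time; it does not remove it. For instance, if~$\GG$ consists of a single time arc from~$s$ to~$z$, then that arc remains a valid temporal $(s,z)$-path after any delay, so the instance is a yes-instance regardless of~$x$. In terms of the flow picture, the maximum flow in~$H$ can be infinite whenever there is an $(s,0)$--$(z,T)$ path using only waiting arcs and arcs from~$E^*$, so it is not bounded by~$\abs{E}$. The correct justification for assuming $x \leq \abs{E}$ (which the proof of \cref{lemma:drc_running_time} already invokes) is simply that every delay set $D \subseteq E$ satisfies $\abs{D} \leq \abs{E}$, so replacing~$x$ by $\min(x, \abs{E})$ yields an equivalent instance. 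This preprocessing step is linear-time and keeps the overall bound quadratic, as you intended.
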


\section{Delayed-Routing Games} \label{ch:drgames}

In this section, we analyze the problems \drg{} and \drpg{}, which ask whether a \emph{traveler} can reach their destination when an \emph{adversary} can delay time arcs while the traveler is traversing them.
More formally, the traveler and the adversary are players in a given game instance of the two-player game \drgame{} or \drpgame{}, respectively, and we ask whether the traveler has a winning strategy.
Starting at a start vertex~$s$ at time step~$0$, the traveler selects an out-going time arc from the current vertex, while the adversary can then delay a selected time arc by $\delta$ time units.
However, the number of delays of the adversary is limited to~$x$, thus they cannot always apply a delay.
The traveler wins when they reach the target vertex~$z$.
In the \drpgame{} the traveler can visit each vertex at most once, whereas no such restriction applies in the \drgame{}.

We present a dynamic program to solve \drg{} in $O(\abs{V}\cdot \abs{E}\cdot  x)$ time.
We later use a slightly modified version of the algorithm to show that \drpg{} is in \CPSPACE{}.
Furthermore, using a polynomial-time many-one reduction from \qbfg{} we prove that \drpg{} is \CPSPACE{}-hard. 
Hence, we can conclude that \drpg{} is \CPSPACE{}-complete.

\subsection
	{A dynamic program for \drg{}} \label{sec:drg_poly_alg}

A key observation for our dynamic program is that there are only polynomially many game states in \drg{}.
Furthermore, the available moves from a node of the game tree and the determination of the winner only depend on the current node/game state and not on its predecessors.
Hence, once we have computed whether the traveler has a winning strategy for any given game state, we can save this information in a dynamic programming table.

Let $I = (\GG = (V,E), s,z \in V, \delta,x \in \NN)$ be an instance of \drg{}.
Starting at vertex $s$ at time step~$0$ and with the adversary having a budget of $x$~delays, the goal for the traveler is to reach the target vertex~$z$.
On the traveler's turn, they can select a time arc incident to the current vertex that occurs at the current time step or later.
The adversary can then decide whether they delay this time arc by $\delta$, thus reducing their number of remaining delays by~$1$.
Once the current vertex is the target vertex~$z$, the game ends with the traveler as the winner.
If at any point there are no available time arcs, then the game ends with the adversary as the winner.
If the game runs indefinitely, then the adversary also wins the game
(since~$\GG$ is finite, this can only occur if the traveler follows a cycle with traversal time~$0$).

At the traveler's turn, the game state can be fully described by the $3$-tuple $(v,t,y)$,
where $v \in V$~is the current vertex, $t \in \NN$~is the current time step, and $y\in[0,x]$ is the number of remaining delays.
We may take~$t$~to be from the set $T :=  \{1, \tatime(e) + \tatrav(e), \tatime(e) + \tatrav(e) + \delta \mid e \in E\}$.
The starting game state is $(s,1,x)$.

We define a dynamic programming table $F:V\times T \times[-1,x]\rightarrow \{\texttt{true},\texttt{false}\}$
where $F(v, t, y) = \true$ if the traveler has a winning strategy from the game state~$(v,t,y)$.
Note that we allow the delay budget to reach~$-1$ for technical reasons,
but we will define $F(v, t, -1) = \true$ for all $v \in V$, $t \in T$.
This can be interpreted as allowing the adversary to ``cheat'' by exceeding their budget of delays, at the cost of immediately losing.
Since this option is never beneficial for the adversary, providing it does not change the game in any significant way.

Denote by $E_t(v) := \{(v, w, t', \lambda) \in E \mid t' \geq t\}$
the set of all time arcs that are available at $v \in V$ at time~$t$ or later.
Then we have the following.

\begin{lemma}\label{lem:DP}
For all $v \in V \setminus \{z\}$, $t \in T$, and $y \in [0,x]$
it holds that
\begin{align}
F(z,t,y)&=\true \label{rec:base}\\
F(v,t,-1) &= \true \label{rec:minus}\\
F(v,t,y) &= \bigvee_{\mathclap{e \in E_t(v)}} \big( F(\tadest(e), \tatime(e) + \tatrav(e),y) \land F(\tadest(e), \tatime(e) + \tatrav(e) + \delta,y-1) \big) \label{rec:general}
\end{align}
where the empty disjunction evaluates to \false{}.
\end{lemma}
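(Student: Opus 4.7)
My plan is to verify each of the three equations directly from the semantics of the \drgame{}, reading $F(v,t,y)=\true$ as ``the traveler has a winning strategy starting from the game state~$(v,t,y)$.'' The equations correspond respectively to the traveler already being at the target, the adversary having overspent their delay budget, and a one-move analysis of the game; I treat them in that order.

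Equation~\cref{rec:base} I would justify in one line: if the traveler is at~$z$, the game ends in their favour by the rules, irrespective of~$t$ and~$y$. For equation~\cref{rec:minus}, I would point back to the paragraph preceding the lemma: the value~$y=-1$ is our convention for the adversary having exceeded their allotted~$x$ delays, in which case we declare the traveler the winner. As the paragraph already notes, it is never rational for the adversary to intentionally overspend, so extending $F$ by this value does not change which game states are winning in the original game; its purpose is merely to give a single uniform recurrence.

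The substance is equation~\cref{rec:general}, which I would justify by a one-move case analysis from a state~$(v,t,y)$ with $v\ne z$ and $y\ge 0$. First I would observe that the traveler's legal moves are exactly the time arcs incident to~$v$ available at time~$t$ or later, i.e., the set~$E_t(v)$; if $E_t(v)=\emptyset$ the traveler cannot move and loses, matching the empty disjunction evaluating to~\false{}. Otherwise, after the traveler picks some $e\in E_t(v)$, the adversary has two options: not to delay~$e$, transitioning to the state $(\taend(e),\tatime(e)+\tatrav(e),y)$, or to delay it, transitioning to $(\taend(e),\tatime(e)+\tatrav(e)+\delta,y-1)$. Since the adversary plays optimally against the traveler, the traveler wins after choosing~$e$ iff \emph{both} successor states are winning; and they win overall iff some~$e$ provides such a guaranteed win. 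This is exactly the disjunction of conjunctions on the right-hand side.

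The only subtlety I anticipate is the boundary case $y=0$, where the adversary in fact cannot delay at all, yet the recurrence still takes a conjunction with $F(\taend(e),\tatime(e)+\tatrav(e)+\delta,-1)$. Here equation~\cref{rec:minus} does the bookkeeping: that term is~\true{}, so the conjunction collapses to the no-delay branch, consistent with the actual game. This is precisely why the $-1$ convention was introduced, and noting this collapse is the only non-routine step in the verification.
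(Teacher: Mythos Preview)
Your proposal is correct and follows essentially the same approach as the paper: a direct verification of each equation from the game semantics, with the one-move case analysis for~\eqref{rec:general} being the only substantive step. Your treatment is in fact slightly more explicit than the paper's, which does not spell out the empty-disjunction case or the $y=0$ boundary; these additions are helpful but do not change the underlying argument.
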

\begin{proof}
Equation~\eqref{rec:base} is trivially correct, since the traveler has reached their destination vertex~$z$.
Equation~\eqref{rec:minus} holds by definition as noted above.

It remains to prove~\eqref{rec:general}.
By the rules of the game, the traveler may choose any time arc $e \in E_t(v)$ when they are in game state~$(v, t, y)$.
If the adversary opts to delay that time arc, then the resulting game state is
$F(\tadest(e), \tatime(e) + \tatrav(e) + \delta,y-1)$.
Otherwise, the resulting game state is
$F(\tadest(e), \tatime(e) + \tatrav(e),y)$.
If, for any~$e \in E_t(v)$, the traveler has winning strategies for both of these game states, then they can win from $(v, t, y)$ by picking~$e$
and then proceeding from either of the two resulting game states according to their respective winning strategy.
Conversely, if all of the time arcs in~$E_t(v)$ lead to a game state from which the adversary has a winning strategy,
then the traveler clearly cannot win.
\end{proof}

In principle, we would like to use \cref{lem:DP} to compute all value of~$F$.
However, in the presence of arcs with traversal time zero, \cref{lem:DP} might not suffice to completely determine all values of~$F$.
Consider the example instance given in \cref{fig:hung-game}.
For all~$v \in V$, $t \in T$, and~$y \in \{0, 1\}$ we clearly have
	\[
	F(b, t, y) = [t \leq 2] \quad\text{and}\quad
	F(a, t, 0) = [t \leq 2]
	\]
by \eqref{rec:general}.
Note that we can compute
\[
	F(a, 2, 0) = F(b, 2, 0) \lor F(s, 2, 0) = \true \lor F(s, 2, 0) = \true
\]
without needing the value of~$F(s, 2, 0)$.
However, $F(s, 1, 1)$ and $F(a, 1, 1)$ depend on each other through \eqref{rec:general}:
\begin{align*}
	F(s, 1, 1) &= \big(F(a, 1, 1) \land F(a, 2, 0) \big) \lor \big( F(a, 2, 1) \land F(a, 3, 0) \big) = F(a, 1, 1) \quad\text{and}\\
	F(a, 1, 1) &= \big(F(s, 1, 1) \land F(s, 2, 0) \big) \lor \big( F(s, 2, 1) \land F(s, 3, 0) \big) \lor \big( F(b, 2, 1) \land F(b, 3, 0) \big)\\
	&= F(s, 1, 1).
\end{align*}

If the value of a table entry is not determined through \eqref{rec:base}--\eqref{rec:general}, then we call this value \emph{hung}.
As we have seen, it can occur that a subset of all table entries remains hung, even when all other entries have been computed.
This is precisely the case when in each clause appearing in the disjunction~\eqref{rec:general}
at least one of the two referenced entries is is either \false{} or hung itself.
In other words, from a hung state, the traveler only has the options to move to a losing state or to another hung state.
In particular, the traveler can play on forever but never reach a winning state.
Thus, in accordance with our rule that the adversary shall win if the game continues forever, we may set all hung table entries to \false{} in this case.

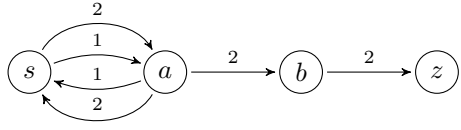
\begin{figure}
	\centering
	\begin{tikzpicture}[scale=1.2]
		\begin{scope}[every node/.style=vertex]
			\node at (0,0) (s) {$s$};
			\node[position=0:8ex from s] (a) {$a$};
			\node[position=0:8ex from a] (b) {$b$};
			\node[position=0:8ex from b] (z) {$z$};
		\end{scope}
		
		\draw[timearc]
			(s) edge[bend left=20, edge label=$1$] (a)
			(s) edge [bend left=60, edge label=$2$] (a)
			(a) edge[bend left=20, edge label=$1$] (s)
			(a) edge[bend left=60, edge label=$2$] (s)
			(a) edge[edge label=$2$] (b)
			(b) edge[edge label=$2$] (z)
			;
	\end{tikzpicture}
	\caption{A \drgame{} instance in which the traveler cycles between~$s$ and~$a$ forever. All traversal times are~$0$, and $\delta = x = 1$.}
	\label{fig:hung-game}
\end{figure}

We summarize this in the following lemma.
\begin{lemma}\label{lem:DPhung}
If every value of~$F$ has either been computed or depends, through~\eqref{rec:general}, on another still uncomputed value,
then the traveler does not have a winning strategy from any of the game states whose values are still uncomputed.
\end{lemma}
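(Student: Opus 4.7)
The plan is to construct an explicit adversary strategy that, starting from any hung state, either forces the play into a state the DP has determined as $\false$ or keeps it inside the hung set forever; in either case the traveler cannot reach $z$.

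First I would pin down the local structure of hung states. Let $S$ be the set of states whose $F$-values remain uncomputed. For $(v, t, y) \in S$ and any arc $e \in E_t(v)$ the traveler might play, write $q_1 = (\tadest(e), \tatime(e) + \tatrav(e), y)$ and $q_2 = (\tadest(e), \tatime(e) + \tatrav(e) + \delta, y-1)$. Because $(v, t, y)$ was not determined to be $\true$, no clause of~\eqref{rec:general} evaluates to $\true$, which forces the following local fact: for every arc $e$, at least one of $F(q_1)$, $F(q_2)$ is either $\false$ or itself in $S$.

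Next I would define an adversary strategy $\sigma$. On a hung state, when the traveler selects $e$, $\sigma$ prefers to move into a $\false$ state: do not delay if $F(q_1) = \false$; otherwise delay (when $y \geq 1$) if $F(q_2) = \false$. If neither successor is $\false$, then by the previous paragraph some successor lies in $S$, and $\sigma$ picks any action leading into $S$. On $\false$ states visited later in the play, $\sigma$ switches to an arbitrary winning adversary continuation, which exists by soundness of the DP on the entries it has actually determined. A short sanity check covers the budget: when $y = 0$, the convention $F(\cdot, \cdot, -1) = \true$ gives $F(q_2) = \true$ for every $e$, so the delay branch of $\sigma$ is never triggered and the budget constraint $y \geq 0$ is automatically respected.

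Finally I would trace the play under $\sigma$. By construction, every move from a state in $S$ lands either in a $\false$ state or again in $S$, never in a $\true$ state. Since $F(z, \cdot, \cdot) = \true$ by~\eqref{rec:base}, the target is never reached while the play remains in $S$; if the play ever exits $S$, it enters a $\false$ state and the adversary wins there by its inductive continuation; if it never exits, the play is infinite and the adversary wins by the infinite-game rule of the \drgame{}. The main obstacle is the inherent cyclic dependence among hung states, because no natural well-founded order on time or budget sorts them; I would circumvent this by exploiting that very cycle, since perpetual play inside $S$ is already a win for the adversary, so $\sigma$ never needs an inductive certificate within $S$ itself.
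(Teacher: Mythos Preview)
Your proposal is correct and follows essentially the same approach as the paper. The paper's argument (given informally in the text preceding the lemma rather than as a separate proof) is precisely that from a hung state every traveler move admits an adversary response into a \false{} state or another hung state, so the traveler can never reach~$z$; you have made this explicit by naming the hung set~$S$, writing out the adversary strategy~$\sigma$, and handling the $y=0$ budget case via~\eqref{rec:minus}.
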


Next, we determine the time required to fill the dynamic programming table.
\begin{lemma}\label{lem:DPtime}
All entires of~$F$ can be computed in $O(\abs{V}\cdot \abs{E} \cdot x)$ time.
\end{lemma}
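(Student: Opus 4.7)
The plan is to evaluate the recurrence in amortised constant time per table entry. Since $|T| \le 2|E|+1$, the table has $O(|V|\cdot|E|\cdot x)$ entries, matching the claimed bound, so $O(1)$ work per entry is enough.

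I would iterate $y$ from $0$ up to $x$ (the base values $F(v,t,-1)$ are fixed to \true{} by \eqref{rec:minus}) and, for each $y$, sweep $t \in T$ in decreasing order. With this ordering, the reference $F(\tadest(e),\tatime(e)+\tatrav(e)+\delta, y-1)$ in \eqref{rec:general} lies in the already-finished $(y-1)$-layer, while $F(\tadest(e),\tatime(e)+\tatrav(e), y)$ has its time coordinate strictly greater than $t$ whenever $\tatrav(e) > 0$ or $\tatime(e) > t$ and is thus also already computed. The only yet-uncomputed references therefore come from arcs with $\tatrav(e)=0$ and $\tatime(e)=t$.

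For the fast per-entry evaluation, I would presort the out-arcs of each vertex $v$ by $\tatime$ and, during the downward sweep over $t$, maintain for every $v$ a running disjunction $\Phi_v$ into which the clause of an arc is inserted as soon as the arc becomes eligible (i.e.\ when $t$ first drops to $\tatime(e)$ or below). Each out-arc of $v$ is then inspected $O(1)$ times across the whole sweep, so the total work per $y$-layer is $O(|V|\cdot|T| + \sum_v d^+(v)) = O(|V|\cdot|E|)$.

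The main obstacle is the intra-slice dependencies created by the arcs with $\tatrav(e)=0$ and $\tatime(e)=t$. At each $(t,y)$ I would first write tentative values $F(v,t,y) := \Phi_v$ and then run a worklist propagation restricted to these zero-traversal arcs: whenever some $F(w,t,y)$ becomes \true, I scan the zero-traversal arcs $(v,w,t,0)$ incoming to $w$, check the already-known conjunct $F(w,t+\delta,y-1)$, and, if it is also \true, flip $F(v,t,y)$ to \true{} and push $v$ onto the worklist (updating $\Phi_v$ so that later steps see the clause as evaluable). \cref{lem:DPhung} certifies that any entry still \false{} at the fixed point may be left as such. Each zero-traversal arc is touched $O(1)$ times per $y$-layer, adding only $O(|E|)$ work per layer, so summing over $y\in\{0,\dots,x\}$ yields the claimed $O(|V|\cdot|E|\cdot x)$ bound.
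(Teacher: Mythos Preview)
Your argument is correct and reaches the stated bound, but it is organised differently from the paper. The paper first rewrites \eqref{rec:general} as
\[
F(v,t,y) \;=\; F(v,t^+,y)\;\lor\!\!\bigvee_{e\in E_t(v)\setminus E_{t^+}(v)}\!\!\big(F(\tadest(e),\tatime(e)+\tatrav(e),y)\land F(\tadest(e),\tatime(e)+\tatrav(e)+\delta,y-1)\big),
\]
so that each time arc contributes to only $O(x)$ direct dependencies overall, and then runs a \emph{single global} forward (push) propagation from the base cases $F(z,\cdot,\cdot)=\true$: whenever an entry turns \true{}, the constantly many entries depending on it are re-examined; at the end, everything still unset is declared \false{} by one global appeal to \cref{lem:DPhung}. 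You instead fix an explicit evaluation order (increasing $y$, decreasing $t$), obtain the same amortised per-arc cost via the running disjunction~$\Phi_v$ (which is exactly the incremental form of the paper's rewritten recurrence), and confine all cyclic dependencies to zero-traversal arcs inside a single $(t,y)$-slice, resolving them with a local worklist pass. Both routes give the same bound; the paper's is more uniform (no per-slice case analysis, one invocation of \cref{lem:DPhung}), while yours makes the evaluation order and the locus of the cycles explicit. One minor point: your per-slice use of \cref{lem:DPhung} is a local application of a globally stated lemma, but since after the worklist fixed point every unresolved reference from a tentatively-\false{} entry points back into the same slice, the reasoning behind \cref{lem:DPhung} (the traveler can only cycle forever among such states) applies verbatim.
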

\begin{proof}
The number of table entries is
$N := \abs{V}\cdot \abs{T}\cdot (x+2) \in \bigO(\abs{V} \cdot \abs{E} \cdot x)$.
For any $t \in T$, denote by $t^+$ the smallest element of~$T$ strictly larger than~$t$.
Begin by observing that~\eqref{rec:general} can be replaced with the following equivalent formula.
\begin{align}
\begin{split}
F(v,t,y) ={} &F(v, t^+, y) \lor{} \\
&\bigvee_{\mathclap{e \in E_t(v) \setminus E_{t^+}(v)}}\; \big( F(\tadest(e), \tatime(e) + \tatrav(e),y) \land F(\tadest(e), \tatime(e) + \tatrav(e) + \delta,y-1) \big)  \label{rec:general2}
\end{split}
\end{align}

We compute the table entries using \cref{lem:DP} as follows.
Start with the entries given by~\eqref{rec:base}.
Whenever an entry~$F(v, t, y)$ is set to~$\true$,
check whether any of the entries directly depending on $F(v, t, y)$ through \eqref{rec:general2} can be also computed (i.e., set to \true, as \eqref{rec:general2} contains no negations).
Note that there are three ways of how another entry $F(v', t', y')$ can directly depend on~$F(v, t, y)$:
\begin{enumerate}[(i)]
\item $v = v'$, $y = y'$, and $t = t'^+$,
\item $y' = y$ and there is a time arc $(v', v, \hat{t}, \lambda)$ with $t' \leq \hat{t} < t'^+$ and $\hat{t} + \lambda = t$, or
\item $y' = y + 1$ and there is a time arc $(v', v, \hat{t}, \lambda)$ with $t' \leq \hat{t} < t'^+$ and $\hat{t} + \lambda + \delta = t$.
\end{enumerate}
In particular, each time arc causes a direct dependency only between about $2x$ pairs of entries through (ii) and (iii).
The number of dependencies through (i) is clearly at most~$N$.
Thus, the overall number of checks to be performed is at most $2x \cdot \abs{E} + N$
and each of these checks can be done in constant time.

Afterwards, all remaining entries must be either \false{} or hung:
Since \eqref{rec:general2} contains no negations, setting entries to \false{} can never cause other entries to become \true{}.
By \cref{lem:DPhung}, we can thus set all remaining entries to \false{}.
Hence, the entire table can be filled in $\bigO(x \cdot \abs{E} + N) \subseteq \bigO(\abs{V} \cdot \abs{E} \cdot x)$~time.
\end{proof}

To solve \drg{}, we can now evaluate $F$ and check whether $F(s,1,x)=\texttt{true}$. Hence, \cref{lem:DPtime} gives us the following.
\begin{theorem}
\drg{} can be solved in $O(\abs{V}\cdot \abs{E} \cdot x)$ time.
\end{theorem}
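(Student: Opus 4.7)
The plan is to assemble the three preceding lemmas into a complete algorithm that decides the instance by computing the single table entry $F(s, 1, x)$. First, I would construct the time set $T = \{1\} \cup \{\tatime(e) + \tatrav(e), \tatime(e) + \tatrav(e) + \delta \mid e \in E\}$ and allocate the table $F \colon V \times T \times [-1, x] \to \{\true, \false\}$. By construction $\abs{T} \in \bigO(\abs{E})$, so the table has $\bigO(\abs{V} \cdot \abs{E} \cdot x)$ entries. Next, I would initialize using the base cases \eqref{rec:base} and \eqref{rec:minus} of \cref{lem:DP} and propagate truth values forward using the equivalent recurrence \eqref{rec:general2}: whenever an entry flips to \true{}, I push it to the dependent entries and re-evaluate them.

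For correctness I would combine \cref{lem:DP} and \cref{lem:DPhung}. The former certifies that every entry the forward propagation marks as \true{} genuinely represents a game state from which the traveler has a winning strategy. Once propagation stabilizes, \cref{lem:DPhung} certifies that every still-undetermined entry is hung, meaning the traveler can at best perpetuate the game without ever reaching $z$; by the rule that infinite play is a loss for the traveler, those entries may be set to \false{}. The answer to the given \drg{} instance is then precisely the value $F(s, 1, x)$, matching the initial game state.

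For the running time I would appeal directly to \cref{lem:DPtime}, which bounds the entire table computation at $\bigO(\abs{V} \cdot \abs{E} \cdot x)$ by counting the waiting-arc dependencies (type (i), contributing $\bigO(N)$) and the time-arc dependencies (types (ii) and (iii), contributing $\bigO(x \cdot \abs{E})$) and observing that each dependency check runs in constant time. The step I expect to need the most care in exposition is explaining why the \true{}-pushing order of updates is necessary rather than a naive topological sweep: zero-traversal-time arcs can induce cyclic dependencies between entries, so there is no valid evaluation order, and it is precisely the hung-entry rule of \cref{lem:DPhung} that lets us safely collapse any residual cycle to \false{} after the propagation has saturated.
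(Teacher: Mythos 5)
Your proposal is correct and takes essentially the same route as the paper: the theorem is obtained by assembling \cref{lem:DP} (recurrence correctness), \cref{lem:DPhung} (collapsing hung entries to \false), and \cref{lem:DPtime} (the running-time bound), then reading off $F(s,1,x)$. Your added remarks on the \true{}-propagation order and the role of zero-traversal-time cycles accurately reflect why the paper routes through \cref{lem:DPhung} rather than a naive topological evaluation.
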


\subsection{PSPACE-hardness of \textsc{Delayed-Routing Path Game}}
\label{sec:pspacehardness}

We now present a polynomial-time reduction from the PSPACE-complete 
\qbfg{} to \drpg{}.
\qbfg{} is a game formulation of the problem \qbf{} that asks whether a given quantified boolean formula is true.
In the game variant, Player~1 and Player~2 choose truth assignments for existentially and universally quantified variables, respectively.
Player~1 wins when the formula is satisfied, otherwise Player~2 wins.
If Player~1 has a winning strategy, then it is a yes-instance.
\qbf{} and \qbfg{} are equivalent and known to be \CPSPACE{}-complete \cite{DBLP:qbfPSPACEcomp}.

In \qbfg{}, we are given a quantified boolean formula $\Phi = Q_1 x_1. Q_2 x_2. \ldots Q_n x_n. \varphi$ with $Q_i \in \{\exists, \forall\}$ and $\varphi$ being a boolean formula.
The game then consists of $n$ rounds.
In the $i$-th round, if $Q_1 = \exists$, then Player~1 selects a truth value for $x_i$.
Else if $Q_1 = \forall$, then Player~2 selects a truth value for $x_i$.
If after the $n$-th round $\varphi$ is satisfied under the selected truth assignment, then Player~1 wins, otherwise Player~2 is the winner.

\problemdef{QBF Game}
{A quantified boolean formula $\Phi = Q_1 x_1. Q_2 x_2. \ldots Q_n x_n. \varphi$ with $Q_i \in \{\exists, \forall\}$.} %
{Is there a winning strategy for Player~1?}

Given a \qbfg{}-instance $\Phi = Q_1 x_1. Q_2 x_2. \ldots Q_n x_n. \varphi$, we construct an instance $(\GG = (V,E), s,z\in V, x, \delta \in \NN)$ of \drpg{}, so that Player~1 has a winning strategy for $\Phi$ if and only if the traveler has a winning strategy for the \drpg{}-instance.
Without loss of generality, we assume that~$\varphi$ is in conjunctive normal form and has three literals per clause.

\subsubsection{Description of the reduction}
The main idea for our reduction is the following. The temporal graph $\GG$ we create in our reduction consists of $n$ chained selection gadgets, one for each quantified variable, and $m-1$ validation gadgets
where $m$~is the number of clauses of~$\varphi$.
In a selection gadget for a variable quantified by an existential quantifier, the traveler can freely choose one of two paths, corresponding to a truth assignment of this variable. 
A delay by the adversary has no effect in this gadget.
In a selection gadget for a variable quantified by a universal quantifier, a delay of the adversary forces the traveler to take a specific path, corresponding to a truth assignment of this variable. 
If not taking this enforced path, the traveler gets immediately stuck and loses the game.
The gadget is constructed in a way that the adversary needs to use exactly one delay.
Using no delay lets the traveler immediately win, while using more than one delay is no better for the adversary than a single delay.

In the validation gadgets, the adversary can force the traveler to take one of two paths, one corresponding to selecting the corresponding clause of $\varphi$, the other leading to the next validation gadget.
In this way, the adversary can select one clause of $\varphi$.
After traversing the validation gadgets, the adversary has no remaining delays.

Finally, if there is a literal in the clause which is satisfied under the truth assignment corresponding to the path taken in the selection gadgets, then the traveler can traverse back to a vertex in the selection gadget.
From this vertex, the traveler can then reach the target vertex.
Otherwise, if all literals in the clause are unsatisfied, then all vertices reachable with a time arc have already been visited when traversing the selection gadgets.
Thus the traveler becomes stuck.

Now we describe the reduction more formally. 
All time arcs in the constructed temporal graph $\GG$ have a traversal time of $0$, thus we write time arcs as $3$-tuples $(v,w,t) \in E$ and omit the traversal time in figures.
We set the number of delays $x := n' + m - 1$, where $n' \le n$ is the number of universal quantifiers in~$\Phi$.
Furthermore, we set $\delta := 1$.
The start and target vertices of the game are $s_1$ and~$z$, respectively, which are added during the construction of the temporal graph.
The gadgets use an offset $o_i$, starting with $o_1 = 0$.
The other offsets are computed while constructing the gadgets.
Initially, we add the vertices $s_1, s_2, \ldots, s_{n+1}$, $z'$, and $z$ to $V$, and we add the time arc 
\[ z' \xrightarrow{o_{n+m}+1} z. \]
The time step $o_{n+m}+1$ is the largest time step of the constructed temporal graph.

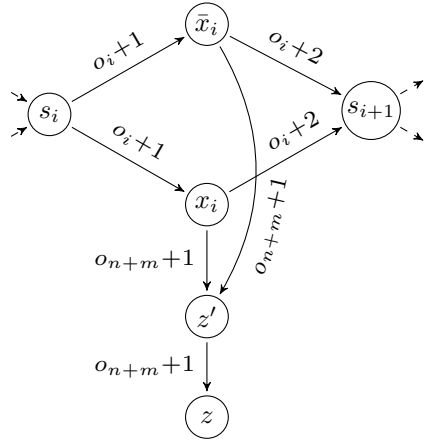
\begin{figure} [t]
	\centering
	\tikzstyle{alter}=[circle, minimum size=16pt, draw, inner sep=1pt] 
	\tikzstyle{majarr}=[draw=black]
	
	\begin{tikzpicture}	[auto, >=stealth',shorten <=1pt, shorten >=1pt]
		\tikzstyle{majarr}=[draw=black,->,shorten <=1.5pt, shorten >=1.5pt]
		\node[alter] at (0,0) (s) {$s_i$};
		\node[alter, position= -30:12ex from s] (x) {$x_i$};
		\node[alter, position= 30:12ex from s] (nx) {$\bar{x}_i$};
		\node[alter, position= 30:12ex from x] (ss) {$s_{i+1}$};
		
		\draw[majarr] (s) edge node[scale=1.25,sloped, midway, anchor=south]{$\scriptstyle o_i + 1$} (x);
		\draw[majarr] (s) edge node[scale=1.25,sloped, midway, anchor=south]{$\scriptstyle o_i + 1$} (nx);		
		\draw[majarr] (x) edge node[scale=1.25,sloped, midway, pos=0.65, anchor=south]{$\scriptstyle o_i + 2$} (ss);
		\draw[majarr] (nx) edge node[scale=1.25,sloped, midway, anchor=south]{$\scriptstyle o_i + 2$} (ss);

		\node[alter, below=6ex of x]  (zp) {$z'$};
		\node[alter, below=5ex of zp] (z) {$z$};

		\draw[majarr] (zp) edge 
			node[scale=1.25,midway, anchor=east]{$\scriptstyle o_{n+m}+1$} 
			(z);				
		\draw[majarr] (x) edge 
			node[scale=1.25,midway, anchor=east]{$\scriptstyle o_{n+m}+1$} 
			(zp);
		\draw[majarr] (nx) edge[bend left=30] 
			node[scale=1.25,sloped, pos=0.7, anchor=north]{$\scriptstyle o_{n+m}+1$} 
			(zp);

		\node[position= -30:-8ex from s] (pd1) {};
		\node[position= 30:-8ex from s] (pd2) {};

		\node[position= -30:3ex from ss] (sc1) {};
		\node[position= 30:3ex from ss] (sc2) {};

		\draw[majarr, dashed] (pd1) edge (s);
		\draw[majarr, dashed] (pd2) edge (s);
		\draw[majarr, dashed] (ss) edge (sc1);
		\draw[majarr, dashed] (ss) edge (sc2);

	\end{tikzpicture}

	\caption{Selection gadget for the existentially quantified variable $x_i$.
	The traveler can choose freely whether the upper or lower path to $s_{i+1}$ is taken. 
	The adversary has no incentive to apply any delays.
	Dwelling in $x_i$ or $\bar{x}_i$ to get to $z'$ is no option for the traveler as long as there are delays left.}
	\label{fig:qbf_sel_gadgets_exist}
\end{figure}

\begin{figure} [ht]
	\centering
	\tikzstyle{alter}=[circle, minimum size=16pt, draw, inner sep=1pt] 
	\tikzstyle{majarr}=[draw=black]
	
	\begin{tikzpicture}	[auto, >=stealth',shorten <=1pt, shorten >=1pt]
		\tikzstyle{majarr}=[draw=black,->,shorten <=1.5pt, shorten >=1.5pt]
		
		\node[alter] at (0,0) (ss) {$s_{i}$};
		
		\node[alter, position= 0:6ex from ss] (ssp) {$s_{i}'$};
		\node[alter, position= 30:9ex from ssp] (nxx) {$\bar{x}_i$};
		\node[alter, position= 0:6ex from nxx] (nxx1) {$\bar{x}_i^{(1)}$};
		\node[position= 0:6ex from nxx1] (nxxi) {\ldots};
		\node[alter, position= 0:12ex from nxxi, scale=.75] (nxxk) {$\bar{x}_i^{(n'+m-i'-2)}$};
		
		\node[position= -40:7ex from ssp] (dummy) {};			

		\node[alter,] at (dummy -| nxxi) (xx) {$x_i$};
		\node[alter] at ([xshift=18ex]ss -| nxxk) (sss) {$s_{i+1}$};

		\draw[majarr, line width=1.5pt, color=blue] (ss) edge node[midway, anchor=south]{$\scriptstyle o_{i} + 1$} (ssp);
		\draw[majarr] (ssp) edge node[sloped, midway, anchor=south]{$\scriptstyle o_{i} + 1$} (xx);
		\draw[majarr] (xx) edge 
			node[sloped, midway, anchor=south]{$\scriptstyle o_{i} + 1$} 
			node[sloped, midway, anchor=north]{$\scriptstyle o_{i} + 2$}
			(sss);		
		
		\draw[majarr] (ssp) edge node[sloped, midway, anchor=south]{$\scriptstyle o_{i} + 2$} (nxx);
		\draw[majarr] (nxx) edge 
			node[midway, anchor=south]{$\scriptstyle o_{i} + 2$}
			node[midway, anchor=north]{$\scriptstyle o_{i} + 3$} 
			(nxx1);
		\draw[majarr] (nxx1) edge 
			node[midway, anchor=south]{$\scriptstyle o_{i} + 3 $} 
			node[midway, anchor=north]{$\scriptstyle o_{i} + 4 $} 
			(nxxi);
		\draw[majarr] (nxxi) edge 
			node[midway, anchor=south]{$\scriptstyle o_{i} + n'+m-i'-1$} 
			node[midway, anchor=north]{$\scriptstyle o_{i} + n'+m-i'$} 
			(nxxk);
		\draw[majarr] (nxxk) edge 
			node[sloped, midway, anchor=south]{$\scriptstyle o_{i} + n'+m-i'$} 
			(sss);

		\node[alter] at ([yshift=-9ex]dummy -| nxx1) (zp) {$z'$};
		\node[alter, below=5ex of zp] (z) {$z$};

		\draw[majarr] (zp) edge 
			node[midway, anchor=east]{$\scriptstyle o_{n+m}+1$} 
			(z);				
		\draw[majarr] (xx) edge 
			node[sloped, midway, anchor=north]{$\scriptstyle o_{n+m}+1$} 
			(zp);
		\draw[majarr] (nxx) edge[bend right=50] 
			node[sloped, midway, anchor=north]{$\scriptstyle o_{n+m}+1$} 
			(zp);

		\draw[majarr] (sss) edge 
			node[sloped, midway, anchor=north]{$\scriptstyle o_{i} + 1$} 
			(z);

		\node[position= -30:-8ex from ss] (pd1) {};
		\node[position= 30:-8ex from ss] (pd2) {};

		\node[position= -30:3ex from sss] (sc1) {};
		\node[position= 30:3ex from sss] (sc2) {};

		\draw[majarr, dashed] (pd1) edge (ss);
		\draw[majarr, dashed] (pd2) edge (ss);
		\draw[majarr, dashed] (sss) edge (sc1);
		\draw[majarr, dashed] (sss) edge (sc2);

	\end{tikzpicture}

	\caption{Selection gadget for the universally quantified variable $x_i$. Let $i'$ be the number of universally quantified variables before the $i$-th variable.
	The traveler enters with $n'+m-i'-1$ delays left.
	If the adversary delays the first (thick, blue) time arc, only the upper path to $s_{i+1}$ is available for the traveler.
	The remaining $n'+m-i'-2$ delays are not enough to make the traveler stuck, and the adversary has no incentive to use any further delays.
	If the adversary does not delay the first (thick, blue) time arc, then the traveler is forced to take the lower path to $s_{i+1}$, since if the adversary applies all $n'+m-i'-1$ delays on the upper time arcs, then the traveler gets stuck.
	The adversary will have to apply one delay on the lower time arcs, otherwise the traveler can take the time arc from $s_{i+1}$ to $z$.	
	}
	\label{fig:qbf_sel_gadgets_univ}
\end{figure}
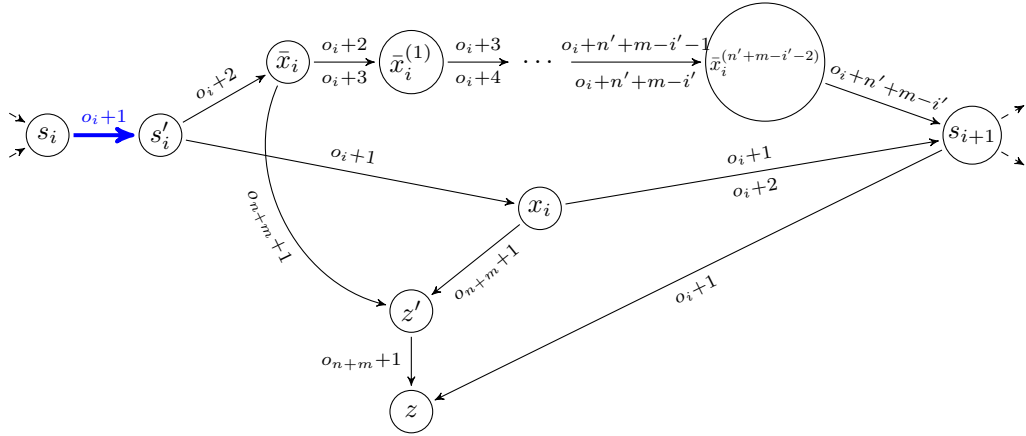

\paragraph*{Selection Gadgets.} The selection gadget is used to assign a truth value to the quantified variable $Q_i x_i$. 
The gadget depends on the type of quantifier: 

\subparagraph*{Case 1.} $Q_i x_i = \exists x_i$; the variable $x_i$ is existentially quantified. 

We add the vertices $x_i$ and $\bar{x}_i$ to $V$.
Furthermore, we add the following time arcs
\[
	s_i \xrightarrow{o_i + 1} x_i \xrightarrow{o_i+2} s_{i+1} 
	\text{\quad and \quad} 
	s_i \xrightarrow{o_i + 1} \bar{x}_i \xrightarrow{o_i+2} s_{i+1}.
\]
corresponding to assigning $x_i$ to true and false, respectively. 
The traveler can choose whether to reach $s_{i+1}$ over the vertex $x_i$ or $\bar{x}_i$.
A delay of the adversary will have no effect.

Additionally, we add the time arcs 
\[
x_i \xrightarrow{o_{m+n}+1} z' \text{\quad and \quad} \bar{x}_i \xrightarrow{o_{m+n}+1} z';
\]
however, when traversing this gadget, if the traveler dwells in $x_i$ or $\bar{x}_i$ to take the time arc to~$z'$, then the adversary can use a delay that makes the only outgoing time arc to $z$ at time step $o_{m+n}+1$ unavailable.

We set the offset $o_{i+1} := o_i + 2$.
An example of a selection gadget for existentially quantified variables can be seen in \cref{fig:qbf_sel_gadgets_exist}.

\subparagraph*{Case 2.} $Q_i x_i = \forall x_i$; the variable $x_i$ is universally quantified. 

Let $i'$ be the number of universally quantified variables before the $i$-th variable.
We add the vertices $s_i', x_i, \bar{x}_i$, and $\bar{x}_i^{(1)}, \bar{x}_i^{(2)}, \ldots, \bar{x}_i^{(n'+m-i'-2)}$ to $V$.
Furthermore, we add a time arc $s_i \xrightarrow{o_i + 1} s_i'$, and the time arcs
\[
s_i' \xrightarrow{o_i + 1} x_i \xrightarrow{o_i + 1, o_i + 2} s_{i+1},
\]
corresponding to setting $x_i$ to true, and 
\[
s_i \xrightarrow{o_i + 2} \bar{x}_i \xrightarrow[o_i+3]{o_i+2} \bar{x}_i^{(1)} \xrightarrow[o_i+4]{o_i+3} \bar{x}_i^{(2)} \ldots \xrightarrow[o_i+n'+m-i']{o_i+n'+m-i'-1} \bar{x}_i^{(n'+m-i'-2)} \xrightarrow{o_i+n'+m-i'} s_{i+1},
\]
corresponding to setting $x_i$ to false.
Finally, we add a time arc $s_{i+1} \xrightarrow{o_i + 1} z$ directly to the end vertex $z$.

By not delaying the time arc $s_i \xrightarrow{o_i + 1} s_i'$, the adversary forces the traveler to take the path through vertex $x_i$, since the path $s_i \rightarrow \bar{x}_i \rightarrow \bar{x}_i^{(1)} \rightarrow \ldots \rightarrow \bar{x}_i^{(n'+m-i'-2)} \rightarrow s_{i+1}$ can be broken by applying all remaining $n'+m-i'-1$ delays.
The adversary is still enforced to apply one delay in $s_i' \rightarrow x_i \rightarrow s_{i+1}$, otherwise the traveler can take $s_{i+1} \xrightarrow{o_i + 1} z$ and directly wins. 
By delaying the time arc $s_i \xrightarrow{o_i + 1} s_i'$, the adversary forces the traveler to take the path through vertex $\bar{x}_i$, since the time arc $s_i \xrightarrow{o_i + 1} x_i$ becomes unavailable.
However, the remaining $n'+m-i'$ delays are not enough to break the path $s_i \rightarrow \bar{x}_i \rightarrow \bar{x}_i^{(1)} \rightarrow \bar{x}_i^{(2)} \ldots \bar{x}_i^{(n'+m-i')} \rightarrow s_{i+1}$.

Additionally, we add the time arcs 
\[
x_i \xrightarrow{o_{m+n}+1} z' \text{\quad and \quad} \bar{x}_i \xrightarrow{o_{m+n}+1} z',
\]
however when traversing this gadget, if the traveler dwells in $x_i$ or $\bar{x}_i$ to take the time arc to $z'$, then the adversary can use a delay that makes the only outgoing time arc to $z$ at time step $o_{m+n}+1$ unavailable.

We set the offset $o_{i+1} := o_i+n'+m-i'$.
An example of a selection gadget for universally quantified variables can be seen in \cref{fig:qbf_sel_gadgets_univ}.

\paragraph*{Validation Gadgets.}
The validation gadgets are used to check whether the formula $\varphi$ is satisfied for the truth assignment chosen in the selection gadgets.
By using all $m-1$ remaining delays, the adversary can force the traveler to visit a vertex corresponding to a specific clause of $\varphi$.
For the clauses $c_1, c_2, \ldots, c_{m-1}$ there is a validation gadget. 
By placing a single delay, the adversary can force the traveler to take one of two junctions, where one corresponds to selecting the clause, and the other leads to the next validation gadget. (For the $m-1$-st validation gadget the other junction corresponds to the $m$-th clause.)
From there, the traveler can reach $z$ only if there is a satisfied literal in the clause.

For each clause $c_i \in \{c_1, c_2, \ldots, c_{m-1}\}$, we add the vertices $v_i$, $v_i^{(l,k)}$ for $k \in [m-i]$, and $v_i^{(r,k)}$ for $k \in [m-i]$.
For $i \in [2, m-1]$, we add the time arc 
$$
	v_{i-1}^{(r, m-(i-1))} \xrightarrow{o_{n+i}+1} v_i,
$$
connecting the $i$-th validation gadget with the $(i-1)$-st validation gadget.
For $i = 1$, we add the time arc
$$
	s_{n+1} \xrightarrow{o_{n+1}+1} v_i,
$$
connecting the last selection gadget with the first validation gadget.

For the branch corresponding to selecting the $i$-th clause, we add the time arcs 
\[
	v_i \xrightarrow{o_{n+i}+1} v_i^{(l,1)} \xrightarrow{o_{n+i}+2} v_i^{(l,2)} \xrightarrow{o_{n+i}+3} \ldots \xrightarrow{o_{n+i}+m-i} v_i^{(l,m-i)}.
\]
Furthermore, for all $v_i^{(l,k)}$ with $k \in [m-i]$, we add a time arc
\[
	v_i^{(l,k)} \xrightarrow{o_{n+i}+k} z,
\]
which enforces the adversary to place delays on all time arcs above, so the traveler cannot directly go to vertex $z$ and win the game.
This ensures that all delays are used after reaching~$v_i^{(l,m-i)}$, which corresponds to selecting the $i$-th clause.

For the branch corresponding to not selecting the $i$-th clause, we add the time arc
\[
v_i \xrightarrow{o_{n+i}+2} v_i^{(r,1)},
\] 
the time arcs
\[
v_i^{(r,k)} \xrightarrow[o_{n+i}+k+2]{o_{n+i}+k+1} v_i^{(r,k+1)},
\]
for $k \in [m-i-1]$, and the time arc
\[
v_i^{(r,m-i)} \xrightarrow{o_{n+i}+m-i+1} v_i^{(r,m-i+1)}.
\]
Delaying all $m-i$ time arcs from $v_i$ to $v_i^{(r,m-i)}$ will break the connection from $v_i^{(r,m-i)} \rightarrow v_i^{(r,m-i+1)}$; however, if there is one delay less, then the adversary does not get any better by using delays.
We set the next offset $o_{n+i+1} := o_{n+i}+m-i+2$.

Finally, we add time arcs for the literals in the clauses.
Let the $i$-th clause be 
	$$(l_i^{(1)} \vee l_i^{(2)} \vee l_i^{(3)}).$$
For all $i \in [m-1]$ and all $j \in [3]$, if $l_i^{(j)} = x_k$, then we add the time arc 
\[
	v_i^{(l,m-i)} \xrightarrow{o_{n+m}+1} \bar{x}_k,
\]
and if $l_i^{(j)} = \bar{x}_k$, then we add the time arc 
\[
	v_i^{(l,m-i)} \xrightarrow{o_{n+m}+1} x_k,
\]
where $k \in [n]$.
For the last clause $c_m$ and for all $j \in [3]$, if $l_i^{(j)} = x_k$, we add the time arc
\[
	v_{m-1}^{(r,2)} \xrightarrow{o_{n+m}+1} \bar{x}_k,
\]
and if $l_i^{(j)} = \bar{x}_k$, then we add the time arc
\[
	v_{m-1}^{(r,2)} \xrightarrow{o_{n+m}+1} x_k,
\]
where $k \in [n]$.
At this point there are no delays remaining, and for all $k \in [n]$, the time arcs
\[
	x_k \xrightarrow{o_{n+m}+1} z' \xrightarrow{o_{n+m}+1} z 
	\text{\quad or \quad} 
	\bar{x}_k \xrightarrow{o_{n+m}+1} z' \xrightarrow{o_{n+m}+1} z,
\]
which have been added previously in the selection gadgets, can be traversed.

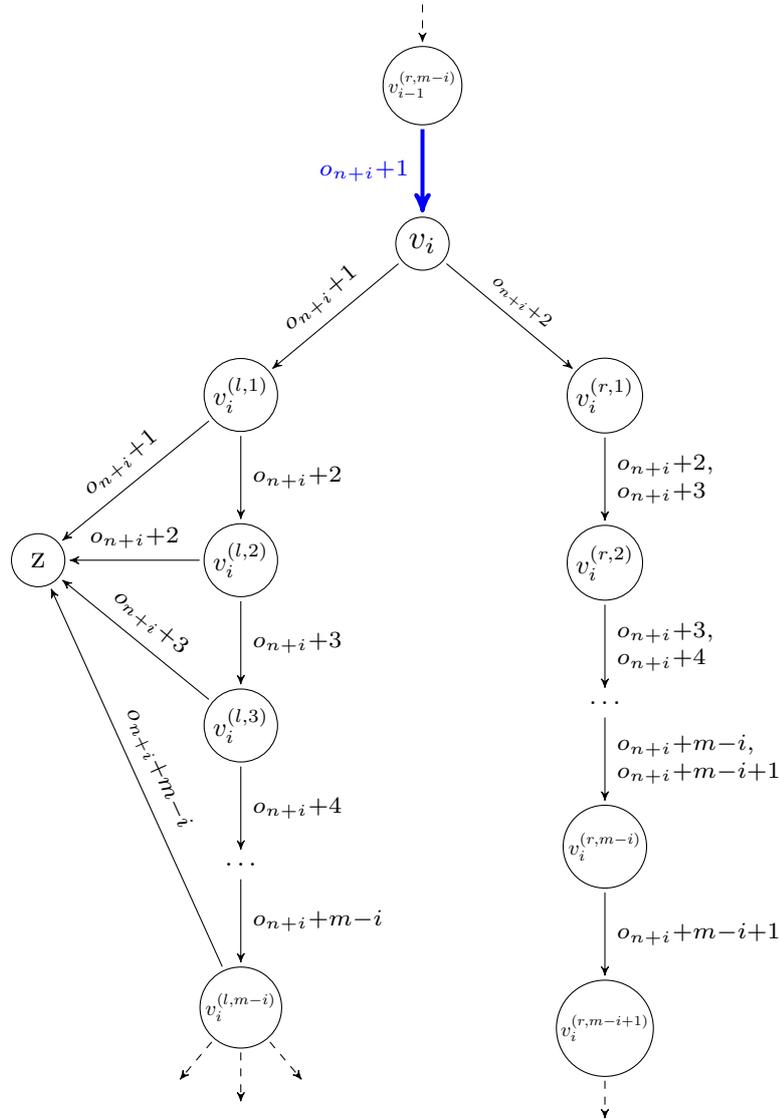
\begin{figure} [htbp]
	\centering
	\tikzstyle{alter}=[circle, minimum size=16pt, draw, inner sep=1pt] 
	\tikzstyle{majarr}=[draw=black]
	
	\begin{tikzpicture}	[auto,>=stealth',shorten <=1pt, shorten >=1pt]
		\tikzstyle{majarr}=[draw=black,->,shorten <=1.5pt, shorten >=1.5pt]
		
		\node[alter, scale=0.75] at (0,0) (vp) {$v_{i-1}^{(r,m-i)}$};
		\node[alter, position= -90:8ex from vp, scale=1.25] (v) {$v_{i}$};

		\node[alter, position= -40:15ex from v] (r1) {$v_{i}^{(r,1)}$};
		\node[alter, position= -90:8ex from r1] (r2) {$v_{i}^{(r,2)}$};
		\node[position= -90:8ex from r2] (r3) {$\ldots$};
		\node[alter, position= -90:8ex from r3, scale=0.8] (r4) {$v_{i}^{(r,m-i)}$};
		\node[alter, position= -90:8ex from r4, scale=0.75] (r5) {$v_{i}^{(r,m-i+1)}$};

		\node[alter, position= -140:15ex from v] (l1) {$v_{i}^{(l,1)}$};
		\node[alter, position= -90:8ex from l1] (l2) {$v_{i}^{(l,2)}$};
		\node[alter, position= -90:8ex from l2] (l3) {$v_{i}^{(l,3)}$};
		\node[position= -90:8ex from l3] (l4) {$\ldots$};
		\node[alter, position= -90:8ex from l4, scale=0.8] (l5) {$v_{i}^{(l,m-i)}$};

		\node[alter, scale=1.25, left=12ex of l2] (z) {z};
		\node[draw=none, right=12ex of r2] (dummyr) {}; %

		\draw[majarr, line width=1.5pt, color=blue] (vp) edge node[scale=1.25,midway, anchor=east]{$\scriptstyle o_{n+i} + 1$} (v);

		\draw[majarr] (v) edge node[scale=1.25,sloped,midway, anchor=south]{$\scriptstyle o_{n+i} + 1$} (l1);
		\draw[majarr] (l1) edge node[scale=1.25,midway, anchor=west]{$\scriptstyle o_{n+i} + 2$} (l2);
		\draw[majarr] (l2) edge node[scale=1.25,midway, anchor=west]{$\scriptstyle o_{n+i} + 3$} (l3);
		\draw[majarr] (l3) edge node[scale=1.25,midway, anchor=west]{$\scriptstyle o_{n+i} + 4$} (l4);
		\draw[majarr] (l4) edge node[scale=1.25,midway, anchor=west]{$\scriptstyle o_{n+i} + m-i$} (l5);

		\draw[majarr] (l1) edge node[scale=1.25,sloped, midway, anchor=south]{$\scriptstyle o_{n+i} + 1$} (z);
		\draw[majarr] (l2) edge node[scale=1.25,sloped, midway, anchor=south]{$\scriptstyle o_{n+i} + 2$} (z);
		\draw[majarr] (l3) edge node[scale=1.25,sloped, midway, anchor=south]{$\scriptstyle o_{n+i} + 3$} (z);
		\draw[majarr] (l5) edge node[scale=1.25,sloped, midway, anchor=south]{$\scriptstyle o_{n+i} + m-i$} (z);

		\draw[majarr] (v) edge node[sloped,midway, anchor=south]{$\scriptstyle o_{n+i} + 2$} (r1);
		\draw[majarr] (r1) edge 
			node[scale=1.25, pos=.35, anchor=west]{$\scriptstyle o_{n+i} + 2,$} 
			node[scale=1.25, pos=.65, anchor=west]{$\scriptstyle o_{n+i} + 3$} (r2);
		\draw[majarr] (r2) edge 
			node[scale=1.25,pos=.35, anchor=west]{$\scriptstyle o_{n+i} + 3,$} 
			node[scale=1.25,pos=.65, anchor=west]{$\scriptstyle o_{n+i} + 4$} (r3);
		\draw[majarr] (r3) edge 
			node[scale=1.25,pos=.35, anchor=west]{$\scriptstyle o_{n+i} + m-i,$} 
			node[scale=1.25,pos=.65, anchor=west]{$\scriptstyle o_{n+i} + m-i+1$} (r4);
		\draw[majarr] (r4) edge 
			node[scale=1.25,midway, anchor=west]{$\scriptstyle o_{n+i} + m-i+1$} (r5);

		\node[draw=none,position=-130:5ex from l5] (lit1) {};
		\node[draw=none,position=-90:5ex from l5] (lit2) {};
		\node[draw=none,position=-50:5ex from l5] (lit3) {};

		\draw[majarr,dashed] (l5) edge (lit1);
		\draw[majarr,dashed] (l5) edge (lit2);
		\draw[majarr,dashed] (l5) edge (lit3);

		\node[draw=none,position=-90:4ex from r5] (succ_val) {};
		\draw[majarr,dashed] (r5) edge (succ_val);

		\node[draw=none, position= 90:4ex from vp] (predummy) {};
		\draw[majarr, dashed] (predummy) edge (vp) {};

	\end{tikzpicture}

	\caption{Validation gadget for the $i$-th clause. 
	The left branch corresponds to selecting the clause, the right branch leads into the next validation gadget. 
	The gadget is entered with $m-i$ delays left. 
	If no delay occurs on the first (thick, blue) time arc, then the traveler is forced to take the left branch, since on the right branch, by using all $m-1$ delays, the traveler gets stuck. The adversary is enforced to use all $m-1$ delays on the left branch, so that the shortcut to $z$ is unavailable.
	If a delay occurs on the first (thick, blue) time arc, then the traveler can only take the right branch.
	In this case, the remaining $m-i-1$ delays are not enough to make the traveler stuck, and the adversary has no incentive to use further delays. 
	}
	\label{fig:qbf_val_gadgets}
\end{figure}
An example validation gadget can be seen in \cref{fig:qbf_val_gadgets}.

\subsubsection{Proof of correctness}

We begin the proof of correctness by first stating the following observation that for the adversary, two game states with the same current vertex and the same set of available time arcs, the game state with more remaining delays is better.

\begin{observation} \label{obs:more_delays_better}
	Let $s_1 = (v,t_1,y_1, V')$ and $s_2 = (v,t_2,y_2, V')$ be two game states, where $v$ is the current vertex, $t_1$ and $t_2$ are the current time steps, $y_1$ and $y_2$ are the remaining delays, and $V'$ is the set of visited vertices.
	Let $t'_1$ and $t'_2$ be the next larger time step from $t_1$ and $t_2$, respectively, where $v$ has out-going time arcs.
	If $t'_1 = t'_2$ and $y_1 \ge y_2$, then all strategies that can be applied in the game state $s_1$ can also be applied in the game state $s_2$.
\end{observation}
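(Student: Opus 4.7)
The plan is to proceed by induction on $|V \setminus V'|$, which bounds the remaining game-tree depth in \drpg{} since each move strictly enlarges the visited set. First I would observe that the set of time arcs the traveler may immediately traverse depends only on $v$, on the visited set $V'$ (endpoints already in $V'$ are forbidden), and on the earliest outgoing time label at $v$ that is no smaller than the current time, which is $t'_1$ (resp.~$t'_2$). Since $V'$ is common to both states and $t'_1 = t'_2$, the sets of legal first moves coincide in $s_1$ and $s_2$, so any move the traveler's strategy prescribes from $s_1$ is also legal from $s_2$.

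For the inductive step, fix a strategy $\sigma$ applicable from $s_1$ and let $e$ be the arc it prescribes. The adversary then decides whether to delay $e$. If not, play continues in $(\taend(e), \tatime(e) + \tatrav(e), y_1, V' \cup \{v\})$ and $(\taend(e), \tatime(e) + \tatrav(e), y_2, V' \cup \{v\})$ respectively; the current vertex, time, and visited set agree, so the next-larger outgoing-arc time at the shared vertex is the same in both, while $y_1 \ge y_2$ is preserved. If the adversary delays (which requires $y_2 \ge 1$ in $s_2$), the two successors are $(\taend(e), \tatime(e) + \tatrav(e) + \delta, y_1 - 1, V' \cup \{v\})$ and the analogue with $y_2 - 1$, again fitting the inductive hypothesis. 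Applying the hypothesis to each branch yields a full strategy applicable from $s_2$.

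The only asymmetric case is $y_2 = 0 < y_1$: the delay branch is simply unavailable in $s_2$, so the adversary's choice set at that step in $s_2$ is a strict subset of its choice set in $s_1$, and the continuation already prescribed by $\sigma$ for the no-delay branch suffices. I expect this case to be the main point to articulate carefully—specifically, that a strategy being ``applicable'' only requires it to prescribe a legal move for those adversarial responses which can actually occur, so shrinking the adversary's option set never breaks applicability. Everything else is a routine matching of successor states through the induction.
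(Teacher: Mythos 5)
Your proof is correct, and it is somewhat more careful than what the paper offers: the paper does not give a formal proof of this observation, only an informal paragraph noting that the traveler's available moves depend solely on the current vertex, the visited set, and the effective next departure time, and that the adversary never profits from having fewer delays. Your induction on $|V \setminus V'|$ (well-founded in \drpg{} because each move strictly enlarges $V'$), together with the explicit point that the adversary's response set in $s_2$ is a (possibly strict) subset of that in $s_1$ so that pruning adversarial branches cannot break applicability, formalizes exactly that intuition. The key insight — the available first moves coincide because the nearest outgoing time step is the same, and the successor states then satisfy the hypothesis again with $y_1 \ge y_2$ preserved — is the same one the paper appeals to. The only minor thing you could make explicit is the base case ($V' = V \setminus \{v\}$, where neither state has any legal move and both are terminal), but the structure and the conclusion match the paper's intent.
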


The traveler's moves are not dependent on the number of remaining delays.
If the current vertex stays the same, then changing the time step to the next larger time step where this vertex has out-going time arcs does not change the available time arcs / set of moves.
The adversary never has a benefit when having no remaining delays. 
Hence, it is better for the adversary to save the delay for a later use.

For the proof, we will look at \textit{rational strategies} for both players. 
A \textit{rational strategy} is a strategy which does not select moves that let the other player win within the same gadget, and for the adversary, additionally, does not use unnecessary delays (see \cref{obs:more_delays_better}).
Hence, even if the other player has a winning strategy, a rational strategy will delay the loss of the game as far as possible.
If a player has a winning strategy, then they also have a rational strategy.
Hence, we only need to look at rational strategies. 
This will reduce the number of possible moves we have to analyze.

Next, we present an invariant that states the number of delays when entering or leaving the selection gadgets.
\begin{invariant} \label{inv:sel_gad_inv}
	When both players use a rational strategy, then for all $i \in [n+1]$ the vertex~$s_i$ is entered with $n'+m-i'-1$ remaining delays, where $i'$ is the number of variables quantified by a universal quantifier until the $i$-th variable.
\end{invariant}

The correctness of this invariant follows from \cref{obs:qbf_game_sel_gad_inv_start} and \cref{lemma:qbf_game_sel_gad_exist,lemma:qbf_game_sel_gad_univ}. 
We can see that the invariant holds at the start of the game.

\begin{observation} \label{obs:qbf_game_sel_gad_inv_start}
	\cref{inv:sel_gad_inv} holds at the beginning of the game.
\end{observation}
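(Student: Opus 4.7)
The plan is to simply verify the invariant for the base case $i = 1$ by unpacking the definitions in the reduction. The game starts with the traveler at vertex $s_1$ and the adversary having their full budget of $x$ delays. By construction of the reduction, $x := n' + m - 1$, where $n'$ is the total number of universally quantified variables in $\Phi$ and $m$ is the number of clauses of $\varphi$.

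For the invariant, at $i = 1$ we have $i' = 0$, since $i'$ counts the universally quantified variables strictly before the $i$-th variable, and there are no variables before the first one. Substituting $i' = 0$ into the formula $n' + m - i' - 1$ yields $n' + m - 1$, which is exactly the value of~$x$. Hence the adversary enters~$s_1$ with $n' + m - i' - 1$ delays, as claimed. No strategic reasoning is required for this base case since no moves have yet been played; it is a pure bookkeeping check.

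The only potential subtlety is making sure that the indexing convention for $i'$ is applied consistently (strictly before the $i$-th variable, not including it); once that is fixed, the computation is immediate. There is no real obstacle here — the difficulty of proving \cref{inv:sel_gad_inv} in full is entirely deferred to the inductive step, which is handled by \cref{lemma:qbf_game_sel_gad_exist} and \cref{lemma:qbf_game_sel_gad_univ}.
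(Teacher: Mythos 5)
Your proof is correct and takes essentially the same approach as the paper: both simply observe that the game starts at~$s_1$ with $x = n' + m - 1$ delays and that $i' = 0$ for $i = 1$, so the invariant's formula $n' + m - i' - 1$ evaluates to exactly~$x$. Your version is slightly more explicit about the substitution, but it is the same bookkeeping check.
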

\begin{proof}
	The game starts at $s_1$ with a total of $n'+m-1$ delays.
	There are no variables before the first variable, hence the invariant holds.
\end{proof}

The following lemma shows that the invariant holds after traversing a selection gadget for a variable quantified by an existential quantifier.
Furthermore, it shows that the traveler is in control of which of the two paths through the gadget is taken.

\begin{lemma} \label{lemma:qbf_game_sel_gad_exist}
	Let the $i$-th quantified variable in $\Phi$ be $\exists x_i$.
	When both players use a rational strategy, the traveler can decide whether the vertices $s_i \rightarrow x_i \rightarrow s_{i+1}$ or the vertices $s_i \rightarrow x_i \rightarrow s_{i+1}$ are traversed.
	Additionally, \cref{inv:sel_gad_inv} holds in $s_{i+1}$.
\end{lemma}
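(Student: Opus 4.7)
The plan is to trace through the existential selection gadget (see \cref{fig:qbf_sel_gadgets_exist}) and show that under rational play it reduces to a direct traversal $s_i \to x_i \to s_{i+1}$ or $s_i \to \bar{x}_i \to s_{i+1}$, with the choice belonging entirely to the traveler. By \cref{inv:sel_gad_inv} the traveler enters $s_i$ with $y := n' + m - i' - 1$ delays in the adversary's hand. The two outgoing time arcs from $s_i$ depart simultaneously at time $o_i + 1$ and are structurally symmetric under $x_i \leftrightarrow \bar{x}_i$, so the traveler may freely pick either; without loss of generality I fix the choice $s_i \to x_i$ and handle the $\bar{x}_i$ branch by symmetry.

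Next I would rule out any rational traveler continuation other than $x_i \xrightarrow{o_i+2} s_{i+1}$. The only alternative at $x_i$ is to wait until time $o_{n+m}+1$ and use the side arc $x_i \to z'$, but reaching $z$ from $z'$ depends on the single time arc $z' \xrightarrow{o_{n+m}+1} z$; as long as $y \ge 1$, the adversary simply delays this arc (or the preceding $x_i \to z'$), stranding the traveler at $z'$. In the degenerate case $y = 0$ the traveler wins outright via this side path, making the claim about $s_{i+1}$ vacuous in the sense that the gadget-traversal condition no longer constrains rational play. Hence a rational traveler proceeds $x_i \to s_{i+1}$.

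The main step is showing that the adversary has no reason to spend a delay inside this gadget. Delaying $s_i \to x_i$ only postpones the traveler's arrival at $x_i$ to time $o_i + 2$, where $x_i \xrightarrow{o_i+2} s_{i+1}$ remains available; delaying $x_i \to s_{i+1}$ lets the traveler arrive at $s_{i+1}$ at time $o_{i+1}+1$, still in time for the first outgoing arc of the next selection gadget. In every scenario the traveler reaches $s_{i+1}$ regardless of the adversary's choices in this gadget, so by \cref{obs:more_delays_better} a rational adversary saves the delay for a later gadget, where extra budget can only help. Consequently the traveler arrives at $s_{i+1}$ with all $y$ delays still in hand; since $x_i$ is existentially quantified, $(i+1)' = i'$, so $y = n' + m - (i+1)' - 1$, exactly what \cref{inv:sel_gad_inv} requires at $s_{i+1}$. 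The only delicate point is ruling out the $z'$ detour, which is precisely where the rational-strategy hypothesis becomes essential; the remainder is routine timing bookkeeping.
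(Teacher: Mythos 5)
Your proposal follows essentially the same structure as the paper's proof: establish via \cref{inv:sel_gad_inv} the delay budget on entry, observe the two symmetric outgoing arcs at time $o_i+1$, rule out the detour through $z'$, argue via \cref{obs:more_delays_better} that the adversary conserves its delays, and conclude that the budget (and hence the invariant) is preserved at $s_{i+1}$. The overall reasoning is sound.

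One small but genuine inaccuracy: in ruling out the $z'$ detour you write that ``the adversary simply delays this arc (or the preceding $x_i \to z'$), stranding the traveler at $z'$,'' where ``this arc'' is $z' \xrightarrow{o_{n+m}+1} z$. Delaying $z' \to z$ does \emph{not} strand the traveler. Delaying a time arc increases its \emph{traversal} time, not its departure time, so the traveler who is already at $z'$ at time $o_{n+m}+1$ still departs at $o_{n+m}+1$ and merely arrives at $z$ one step later --- which still wins the game. The adversary must instead delay the \emph{preceding} arc $x_i \to z'$ (your parenthetical alternative, and the move the paper uses), so that the traveler reaches $z'$ only at $o_{n+m}+2$ and the sole outgoing arc $z' \xrightarrow{o_{n+m}+1} z$ has already departed. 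So the second option in your parenthesis is correct and the first is not; as written, the parenthetical phrasing suggests both work. Also, the $y=0$ case you carve out does not actually arise here: the invariant gives $y = n'+m-i'-1 \ge m-1$, which is $\ge 1$ whenever the formula has at least two clauses, so you could simply dismiss it rather than calling the claim vacuous.
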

\begin{proof}
	Let $i'$ be the number of variables quantified by a universal quantifier before the $i$-th variable. 
	Due to \cref{inv:sel_gad_inv}, there are $n'+m-i'-1$ remaining delays at the start vertex of the selection gadget $s_i$.
	If $i = 1$, then it is the first gadget, and the current time step is $0 = o_1$.
	Otherwise if $i > 1$, then the incoming time arcs in $s_i$ are at latest at $o_i$, hence the current time in~$s_i$ is at latest $o_i+1$.
	Hence, both time arcs $s_i \xrightarrow{o_i+1} x_i$ and $s_i \xrightarrow{o_i+1} \bar{x}_i$ are available.
	Due to \cref{obs:more_delays_better} the adversary will not use any delay, since the next time step where these vertices have out-going time arcs is $o_i+2$.
	From vertex $x_i$ or $\bar{x}_i$ the traveler cannot take the time arc to $z'$ at $o_{n+m}+1$, since the adversary can use a delay, and the only time arc $z' \xrightarrow{o_{n+m}+1} z$ from $z'$ becomes unavailable leading to a loss for the traveler.
	Hence, the traveler will take the time arc to $s_{i+1}$ at $o_i+2$.
	Again due to \cref{obs:more_delays_better} The adversary will not use a delay on this time arc, since the next out-going time arc at $s_{i+1}$ is at $o_{i+1}+1 = o_i+3$.
	Thus, the number of remaining delays is unchanged, and the number of variables quantified by a universal quantifier before the $(i+1)$-st variable is still~$i'$.
	Consequently, \cref{inv:sel_gad_inv} holds at $s_{i+1}$.
\end{proof}

The next lemma shows that the invariant holds after traversing a selection gadget for a variable quantified by a universal quantifier.
Furthermore, it shows that the adversary is in control of which of the two paths through the gadget is taken.

\begin{lemma} \label{lemma:qbf_game_sel_gad_univ}
	Let the $i$-th quantified variable in $\Phi$ be $\forall x_i$.
	When both players use a rational strategy, the adversary can decide whether the vertex $s_{i+1}$ is reached through the vertex~$x_i$ or the vertex~$\bar{x}_i$.
	Additionally, \cref{inv:sel_gad_inv} holds in $s_{i+1}$.
\end{lemma}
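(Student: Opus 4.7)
The plan is to mirror the reasoning of \cref{lemma:qbf_game_sel_gad_exist}, but with the adversary's single forced delay now determining the branch. By \cref{inv:sel_gad_inv}, the traveler enters $s_i$ with $y := n'+m-i'-1$ remaining delays. The only out-arc of $s_i$ is $s_i \xrightarrow{o_i+1} s_i'$, so the traveler is forced to take it; the adversary then decides whether to delay this first (blue) arc or not. I would argue that this single binary decision of the adversary determines whether the traveler is funneled through $x_i$ (no delay) or through $\bar{x}_i$ (delay), which corresponds to assigning $x_i$ to \texttt{true} or \texttt{false}, respectively.

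In Case A (no delay on the blue arc) the traveler is at $s_i'$ at time $o_i+1$ with $y$ delays remaining. I would first rule out the lower branch: if the traveler takes $s_i' \xrightarrow{o_i+2} \bar{x}_i$, then the adversary can spend all $y = n'+m-i'-1$ remaining delays on the arcs between $\bar{x}_i, \bar{x}_i^{(1)}, \ldots, \bar{x}_i^{(n'+m-i'-2)}, s_{i+1}$, pushing the traveler's current time strictly past every successive arc's time and stranding them; this is the key combinatorial check of the lemma, which follows from a straightforward induction along the chain because each ``double'' arc pair at $o_i+2+k,o_i+3+k$ can be fully overshot by the two delays assigned to that hop. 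I would similarly eliminate the escape arcs $x_i,\bar{x}_i \xrightarrow{o_{n+m}+1} z'$: if the traveler waits to take one, the adversary delays the only out-arc $z'\xrightarrow{o_{n+m}+1}z$, which leaves the traveler stuck. Thus rationally the traveler takes $s_i'\xrightarrow{o_i+1}x_i$. Then on the two parallel arcs $x_i\xrightarrow{o_i+1,\,o_i+2}s_{i+1}$ the adversary must spend exactly one delay: zero delays let the traveler arrive at $s_{i+1}$ at time $o_i+1$ and win immediately via $s_{i+1}\xrightarrow{o_i+1}z$; more than one delay is wasteful by \cref{obs:more_delays_better}. Hence the traveler reaches $s_{i+1}$ via $x_i$ having used $1$ delay.

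In Case B (delay on the blue arc), the traveler arrives at $s_i'$ only at time $o_i+2$, so $s_i'\xrightarrow{o_i+1}x_i$ is no longer available and the traveler must take $s_i'\xrightarrow{o_i+2}\bar{x}_i$. Now the adversary has $y-1=n'+m-i'-2$ delays remaining, which by the same chain analysis as above is one short of blocking the $\bar{x}_i$-path; after discarding the $z'$-escape for the same reason as in Case A, the traveler proceeds uninterrupted to $s_{i+1}$, while the adversary rationally saves the remaining delays by \cref{obs:more_delays_better}. Arrival at $s_{i+1}$ occurs at time $o_i+n'+m-i' > o_i+1$, so the shortcut $s_{i+1}\xrightarrow{o_i+1}z$ is unavailable and the game continues.

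In both cases, exactly one delay has been consumed inside the gadget, so the traveler enters $s_{i+1}$ with $y-1 = n'+m-i'-2$ delays. Since $x_i$ is universally quantified, $(i+1)' = i'+1$, hence $n'+m-(i+1)'-1 = n'+m-i'-2$, and \cref{inv:sel_gad_inv} is preserved. The main obstacle is the careful chain analysis showing that $n'+m-i'-1$ delays suffice to block the lower path while $n'+m-i'-2$ do not; I expect to handle this by induction along the $\bar{x}_i^{(k)}$-chain, tracking the traveler's earliest possible arrival time at each vertex as a function of the delays already spent on the preceding hops.
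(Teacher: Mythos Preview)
Your approach is essentially the same as the paper's: the same case split on whether the adversary delays the first arc $s_i \to s_i'$, the same use of the $z'$-escape argument, the same appeal to \cref{obs:more_delays_better} to bound wasted delays, and the same final delay count establishing \cref{inv:sel_gad_inv}.

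One technical point in your chain analysis needs correction. In Case~A, to strand the traveler on the lower branch the adversary must begin by delaying the arc $s_i' \xrightarrow{o_i+2} \bar{x}_i$ itself (so the traveler reaches $\bar{x}_i$ only at $o_i+3$), and then delay exactly \emph{one} arc per hop along $\bar{x}_i \to \bar{x}_i^{(1)} \to \cdots \to \bar{x}_i^{(n'+m-i'-2)}$; this uses $1+(n'+m-i'-2)=n'+m-i'-1$ delays and leaves the traveler at $\bar{x}_i^{(n'+m-i'-2)}$ at time $o_i+n'+m-i'+1$, past the sole outgoing arc. Your description (delays only on the arcs from $\bar{x}_i$ onward, with ``two delays assigned to that hop'') does not actually strand the traveler---they still reach $s_{i+1}$, merely one time step late, and the adversary has burned all delays for nothing. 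This is exactly the detail your planned induction will surface, so the plan is sound, but the informal count you gave is off.
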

\begin{proof}
	Let $i'$ be the number of variables quantified by a universal quantifier before the $i$-th variable. 
	Due to \cref{inv:sel_gad_inv}, there are $n'+m-i'-1$ remaining delays.
	If $i = 1$, then it is the first gadget, and the current time step is $0 = o_1$.
	Otherwise if $i > 1$, then the incoming time arcs in $s_i$ are at latest at $o_i$, hence the current time in $s_i$ are at latest $o_i+1$.
	Hence, the time arc $s_i \xrightarrow{o_i+1} s'_i$ is available.
	Since it is the only outgoing time arc from $s_i$, the traveler has to take it.
	Depending on whether the adversary is delaying this time arc, the adversary can enforce which path to~$s_{i+1}$ is taken by the traveler.
	
	\textit{Case 1:} The adversary uses a delay on $s_i \xrightarrow{o_i+1} s'_i$ and thus the traveler arrives in $s'_i$ at time step $o_i+2$ with $n'+m-i'-2$ delays left.
	Hence, the time arc $s'_i \xrightarrow{o_i+1} x_{i}$ is not available, and theh traveler is forced to take the time arcs
\[
	s'_i \xrightarrow{o_i + 2} \bar{x}_i \xrightarrow[o_i+3]{o_i+2} \bar{x}_i^{(1)} \xrightarrow[o_i+4]{o_i+3} \bar{x}_i^{(2)} \ldots \xrightarrow[o_i+n'+m-i']{o_i+n'+m-i'-1} \bar{x}_i^{(n'+m-i'-2)}.
\]
	The vertices $\bar{x}_i^{(k)}$ for all $k \in [n'+m-i'-2]$ have a time arc to the next vertex at two time steps.
	In order to only make the time arc at the latter time step available, the adversary has to delay all $k+1$ time arcs before.
	Thus, to reach $\bar{x}_i^{(n'+m-i'-2)}$ at time step $o_i+n'+m-i'+1$, the adversary would need to use $n'+m-i'-1$ delays, but only $n'+m-i'-2$ are remaining.
	Since the next outgoing time arc at vertex $\bar{x}_i^{(n'+m-i'-2)}$ is at $o_i+n'+m-i'$, the adversary will not use any delays on the vertices above, due to \cref{obs:more_delays_better}.
	The time arc $\bar{x}_i^{(n'+m-i'-2)} \xrightarrow{o_i+n'+m-i'} s_{i+1}$ is the only available time arc from $\bar{x}_i^{(n'+m-i'-2)}$, and thus the traveler will take this time arc.
	Again due to \cref{obs:more_delays_better}, the adversary will not use a delay, since the outgoing time arcs of $s_{i+1}$ are at the earliest at $o_{i+1} + 1 = o_i+n'+m-i'+1$. 
	
	\textit{Case 2:} The adversary does not use a delay on $s_i \xrightarrow{o_i+1} s'_i$ and thus the traveler arrives in $s'_i$ at time step $o_i+1$ with $n'+m-i'-1$ delays left.
	Both outgoing time arcs $s'_i \xrightarrow{o_i + 2} \bar{x}_i$ and $s'_i \xrightarrow{o_i + 1} x_i$ are available.
	However, the traveler cannot choose the first time arc, since the adversary can use all remaining $n'+m-i'-1$ delays to make the traveler arrive in $\bar{x}_i^{(n'+m-i'-2)}$ at $o_i+n'+m-i'+1$, where no time arc for the traveler is available:
\[
	s'_i \xrightarrow{{\color{red} o_i + 3}} \bar{x}_i
	\xrightarrow[{\color{red}o_i+4}]{{\color{gray}o_i+2}} \bar{x}_i^{(1)}
	\xrightarrow[{\color{red}o_i+5}]{{\color{gray}o_i+3}} \bar{x}_i^{(2)} \ldots
	\xrightarrow[{\color{red}o_i+n'+m-i'+1}]{{\color{gray}o_i+n'+m-i'-1}} \bar{x}_i^{(n'+m-i'-2)} 
	\xrightarrow{{\color{gray}o_i+n'+m-i'}} s_{i+1}
\]
	Here, red time labels denote delayed time arcs and gray time labels denote unavailable time arcs for the traveler.
	Hence, the traveler has to traverse the time arcs 
\[
	s'_i \xrightarrow{o_i + 1} x_i \xrightarrow[o_i + 2]{o_i + 1} s_{i+1}.
\]
	Since there is the time arc $s_{i+1} \xrightarrow{o_i + 1} z$ directly to the target vertex, the adversary has to use one delay, in order to delay the arrival in $s_{i+1}$ to $o_i + 2$ and make this time arc unavailable.
	Due to \cref{obs:more_delays_better}, the adversary has no incentive to use two delays to have an arrival time of $o_i + 3$ in $s_{i+1}$, since the outgoing time arcs in $s_{i+1}$ are at earliest at $o_{i+1} + 1 = o_i+n'+m-i' + 1 \ge o_i+3$.
	Hence, the number of remaining delays in $s_{i+1}$ is $n'+m-i'-2$.
	
	In both cases, the traveler cannot take the time arcs to $z'$ at $o_{n+m}+1$ from vertex $x_i$ or vertex $\bar{x}_i$, since the adversary can use a delay, and the only time arc $z' \xrightarrow{o_{n+m}+1} z$ from $z'$ becomes unavailable leading to a loss for the traveler.
	The number of variables quantified by a universal quantifier until the $(i+1)$-st variable is $i'+1$, since the current variable~$x_i$ is universally quantified.
	Hence, in $s_{i+1}$ there are $n'+m-i'-2$ remaining delays, and thus \cref{inv:sel_gad_inv} holds.
\end{proof}

We now present another invariant that states the number of remaining delays at the beginning of the validation gadgets.

\begin{invariant} \label{inv:qbf_game_val_gad_delays}
	For all $i \in [m-1]$, if the start vertex of the $i$-th validation gadget ($v_{i-1}^{(r,m-i)}$ for $i \ge 2$, and $s_{n+1}$ for $i = 1$) is entered, then there are $m-i$ remaining delays, assuming both players use a rational strategy.
\end{invariant}

The correctness of the invariant follows from \cref{obs:qbf_game_val_gad_inv_start} and \cref{lemma:qbf_game_val_gad}.

\begin{observation} \label{obs:qbf_game_val_gad_inv_start}
	\cref{inv:qbf_game_val_gad_delays} holds at $s_{n+1}$, the beginning of the first validation gadget.
\end{observation}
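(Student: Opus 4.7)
The plan is to simply apply Invariant~\ref{inv:sel_gad_inv} at the vertex $s_{n+1}$ and check that the resulting number of remaining delays agrees with what Invariant~\ref{inv:qbf_game_val_gad_delays} demands for $i=1$.

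More concretely, I would first note that Invariant~\ref{inv:sel_gad_inv} is stated for all $i \in [n+1]$, and has already been established (via Observation~\ref{obs:qbf_game_sel_gad_inv_start} together with Lemmas~\ref{lemma:qbf_game_sel_gad_exist} and~\ref{lemma:qbf_game_sel_gad_univ}) under the assumption that both players play rationally. Applied to $s_{n+1}$, it tells us that the traveler reaches $s_{n+1}$ with $n' + m - i' - 1$ delays still available, where $i'$ denotes the number of universally quantified variables occurring before the $(n+1)$-st variable, i.e., among $Q_1, Q_2, \ldots, Q_n$. By definition of $n'$ (the total number of universal quantifiers in $\Phi$), we have $i' = n'$ at this point.

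Substituting, the number of remaining delays at $s_{n+1}$ is
\[
	n' + m - n' - 1 = m - 1.
\]
Since $s_{n+1}$ is precisely the start vertex of the first validation gadget (the case $i=1$ in Invariant~\ref{inv:qbf_game_val_gad_delays}), and the invariant requires $m - i = m - 1$ remaining delays there, the invariant holds at the beginning of the first validation gadget. There is no real obstacle to this argument; the only thing to double-check is the indexing convention for $i'$ used in Invariant~\ref{inv:sel_gad_inv}, which I would verify matches ``variables quantified before position~$i$'' so that $i' = n'$ when $i = n+1$.
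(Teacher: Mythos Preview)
Your proposal is correct and follows essentially the same argument as the paper: apply \cref{inv:sel_gad_inv} at $s_{n+1}$, observe that $i' = n'$ there, and simplify $n' + m - n' - 1 = m - 1$, which is exactly the number of delays required by \cref{inv:qbf_game_val_gad_delays} for $i = 1$.
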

\begin{proof}
	Due to \cref{inv:sel_gad_inv}, there are $n'+m-i'-1$ remaining delays in $s_i$, where $i'$ is the number of variables quantified by a universal quantifier before the $i$-th variable.
	There are $n'$ variables quantified by a universal quantifier.
	Hence, there are $n'+m-n'-1 = m-1$ delays remaining in $s_{n+1}$.
\end{proof}

We now show that the adversary controls the paths taken through the validation gadgets.
The mechanism is similar to the selection gadget for variables quantified by a universal quantifier.

\begin{lemma} \label{lemma:qbf_game_val_gad}
	Assuming both players use a rational strategy, for all $i \in [m-1]$, in the $i$-th validation gadget, the adversary can decide whether the traveler takes the temporal path to vertex $v_i^{(l, m-i)}$ with $0$ delays remaining, or the temporal path to vertex $v_i^{(r, m-i+1)}$ with $m-i-1$ delays left (\cref{inv:qbf_game_val_gad_delays} holds).
\end{lemma}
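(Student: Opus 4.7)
The plan is to mirror the case analysis used in \cref{lemma:qbf_game_sel_gad_univ}, splitting on whether the adversary delays the thick blue arc entering $v_i$. First I invoke (inductively, with \cref{obs:qbf_game_val_gad_inv_start} as the base case) \cref{inv:qbf_game_val_gad_delays} to fix $m-i$ as the adversary's delay budget at the entry vertex of the $i$-th gadget. Since that vertex has a unique outgoing arc, the traveler is forced to take it and the adversary's only choice is whether to spend a delay on it.

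In Case~1 (no delay on entry), the traveler lands at $v_i$ at time $o_{n+i}+1$ with $m-i$ delays remaining, so both branches are initially available. I will first show that taking the right branch is losing for the traveler: by spending all $m-i$ delays on the arcs along $v_i \to v_i^{(r,1)} \to \cdots \to v_i^{(r,m-i)}$, the adversary pushes the arrival at $v_i^{(r,m-i)}$ past time $o_{n+i}+m-i+1$, rendering the unique single-labelled exit arc to $v_i^{(r,m-i+1)}$ unavailable. Hence any rational traveler takes the left branch. On the left branch, each vertex $v_i^{(l,k)}$ carries a shortcut arc to $z$ at time $o_{n+i}+k$, which is exactly the undelayed arrival time at $v_i^{(l,k)}$. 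To prevent the traveler from escaping to $z$, the adversary must delay every one of the $m-i$ arcs along the left branch, after which the traveler arrives at $v_i^{(l,m-i)}$ with $0$ delays remaining, as claimed.

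In Case~2 (delay used on entry), the traveler arrives at $v_i$ at time $o_{n+i}+2$ with $m-i-1$ delays left. The arc $v_i \to v_i^{(l,1)}$, labelled only at $o_{n+i}+1$, is now in the past, so the traveler has no option but to take $v_i \to v_i^{(r,1)}$ and proceed along the right branch. Along that branch, the two-labelled ``forgiving'' arcs together with only $m-i-1$ remaining delays cannot push the arrival at $v_i^{(r,m-i)}$ past $o_{n+i}+m-i+1$, so the final arc to $v_i^{(r,m-i+1)}$ remains available no matter which subset of the remaining delays the adversary spends; by \cref{obs:more_delays_better} a rational adversary saves all of them. Consequently the traveler reaches $v_i^{(r,m-i+1)}$ with exactly $m-i-1 = m-(i+1)$ delays remaining, which establishes the inductive step of \cref{inv:qbf_game_val_gad_delays} for gadget $i+1$.

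The main obstacle is the precise time-step bookkeeping that distinguishes Cases~1 and~2 on the right branch: the gadget is calibrated so that $m-i$ delays are just enough to break the right branch while $m-i-1$ are one short. Establishing this quantitative threshold rigorously, and simultaneously confirming via \cref{obs:more_delays_better} that no other combinations of moves improve either player's outcome (in particular ruling out ``wasted'' delays on arcs that would leave the adversary short in a later subcase), will form the heart of the proof.
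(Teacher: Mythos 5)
Your proposal is correct and follows essentially the same approach as the paper: invoke the invariant for the entry budget $m-i$, observe the forced entry move, then split on whether the adversary delays that move, showing that the no-delay case forces the left branch with all delays consumed while the delay case forces the right branch with $m-i-1$ delays surviving. The only cosmetic difference is that your Cases~1 and~2 are swapped relative to the paper's labeling; the timing calculations you sketch (undelayed left-branch arrivals coinciding with the shortcut labels, and $m-i$ delays being exactly enough to push the right-branch arrival past the exit arc) are precisely the bookkeeping the paper carries out.
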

\begin{proof}
	Due to \cref{inv:qbf_game_val_gad_delays}, the start of the $i$-th validation gadget ($v_{i-1}^{(r,m-i)}$ for $i \ge 2$, and $s_{n+1}$ for $i = 1$) is entered with $m-i$ delays.
	Incoming time arcs are at latest at $o_{n+i}$, and thus the only out-going time arc to $v_i$ at $o_{n+i}+1$ is always available.
	Since it is the only out-going time arc, the traveler has to take this time arc.
	Depending on whether the adversary is delaying this time arc, they can enforce which temporal path is taken by the traveler.
	
	\textit{Case 1:} 
	The adversary uses a delay on $v_{i-1}^{(r,m-i)} \xrightarrow{o_{n+i}+1} v_i$ (or $s_{n+1} \xrightarrow{o_{n+1}+1} v_1$ for $i=1$) and thus the traveler arrives in $v_i$ at time step $o_{n+i}+2$ with $m-i-1$ delays left.
	Hence, the time arc $v_i \xrightarrow{o_{n+i}+1} v_{i}^{(l,1)}$ is not available, and the traveler is enforced to take the time arcs
	\[
	v_i \xrightarrow{o_{n+i} + 2} 
	v_i^{(r,1)} \xrightarrow[o_{n+i}+3]{o_{n+i}+2} 
	v_i^{(r,2)} \xrightarrow[o_{n+i}+4]{o_{n+i}+3} 
	\ldots 
	\xrightarrow[o_{n+i}+m-i+1]{o_{n+i}+m-i} 
	v_i^{(r,m-i)}
	\xrightarrow{o_{n+i}+m-i+1} 
	v_i^{(r,m-i+1)}.
	\]
	The vertices $v_i^{(r,k)}$ for all $k \in [m-i-1]$ have a time arc to the next vertex at two time steps.
	In order to only make the time arc at the latter time step available, the adversary has to delay all $k$ time arcs before.
	Thus, to reach $v_i^{(m-i)}$ at time step $o_{n+i}+m-i+2$, the adversary would need to use $m-i$ delays, but only $m-i-1$ are remaining.
	Since the next outgoing time arc at vertex $v_i^{(r,m-i)}$ is at $o_{n+i}+m-i+1$, the adversary will not use any delays on the vertices above, due to \cref{obs:more_delays_better}.
	The time arc $v_i^{(r,m-i)} \xrightarrow{o_{n+i}+m-i+1} v_i^{(r,m-i+1)}$ is the only available time arc from $v_i^{(r,m-i)}$, and thus the traveler will take this time arc.
	Again due to \cref{obs:more_delays_better}, the adversary will not use a delay, since the outgoing time arcs of $v_i^{(r,m-i+1)}$ are at the earliest at $o_{n+i+1} + 1 = o_{n+i}+m-i+2$. 
	The vertex $v_i^{(r,m-i+1)}$ is the start of the $(i+1)$-st validation gadget, and since the number of remaining delays is $m-i-1$, \cref{inv:qbf_game_val_gad_delays} holds.

	\textit{Case 2:}
	The adversary does not use a delay on $v_{i-1}^{(r,m-i)} \xrightarrow{o_{n+i}+1} v_i$ (or $s_{n+1} \xrightarrow{o_{n+1}+1} v_1$ for $i=1$) and thus the traveler arrives in $v_i$ at time step $o_{n+i}+1$ with $m-i$ delays left.
	Both outgoing time arcs $v_i \xrightarrow{o_{n+i} + 2} v_i^{(r,1)}$ and $v_i \xrightarrow{o_{n+i} + 1} v_i^{(l,1)}$ are available.
	However, the traveler cannot choose the first time arc, since the adversary can use all remaining $m-i$ delays to make the traveler arrive in $v_i^{(m-i)}$ at $o_{n+i}+2$, where no time arc for the traveler is available:
	\[
	v_i \xrightarrow{{\color{red} o_{n+i} + 3}} v_i^{(r,1)}
	\xrightarrow[{\color{red}o_{n+i}+4}]{{\color{gray}o_{n+i}+2}} v_i^{(r,2)}
	\xrightarrow[{\color{red}o_{n+i}+5}]{{\color{gray}o_{n+i}+3}} \ldots
	\xrightarrow[{\color{red}o_{n+i}+m-i+2}]{{\color{gray}o_{n+i}+m-i}} v_i^{(m-i)} 
	\xrightarrow{{\color{gray}o_{n+i}+m-i+1}} v_i^{(r,m-i+1)}
	\]
	Here, red time labels denote delayed time arcs (delay $\delta = 1$ is already added) and gray time labels denote unavailable time arcs for the traveler.
	Hence, the traveler has to take the time arc $v_i \xrightarrow{o_{n+i}+1} v_i^{(l,1)}$.
	Since there is a direct time arc $v_i^{(l,1)} \xrightarrow{o_{n+i}+1} z$ to the target vertex, the adversary has to delay the time arc, so that the arrival in $v_i^{(l,1)}$ is at $o_{n+i}+2$ with a total of $m-i-1$ remaining delays.
	Now, for all $k \in [m-i-1]$ the traveler has to take the time arc $v_i^{(l,k)} \xrightarrow{o_{n+i}+k+1} v_i^{(l,k+1)}$, and the adversary has to delay the time arc, so that the time arc $v_i^{(l,k+1)} \xrightarrow{o_{n+i}+k+1} z$ becomes unavailable.
	Once the traveler arrives in~$v_i^{(l,m-i)}$, there are no delays remaining.
\end{proof}

Now, by repeatedly applying \cref{lemma:qbf_game_val_gad}, we see that the adversary can decide in which of the end-points of the validation gadgets the traveler arrives, and no delays are remaining at the arrival.

\begin{observation} \label{obs:qbf_game_val_gadgets}
	When both players use a rational strategy, being at the start of the first validation gadget $s_{n+1}$ with $m-1$ remaining delays, the adversary can choose exactly one vertex $v_i^{(l,m-i)}$, for $i \in [m-1]$, or $v_{m-1}^{(r,2)}$, that is visited with no delays left when arriving in this vertex.
\end{observation}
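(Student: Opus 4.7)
My plan is to obtain this observation as a straightforward iterated consequence of Lemma \ref{lemma:qbf_game_val_gad}, with Observation \ref{obs:qbf_game_val_gad_inv_start} providing the base case. The key point is that Lemma \ref{lemma:qbf_game_val_gad} gives the adversary, inside each validation gadget $i$ entered with $m-i$ delays, exactly two rational outcomes: either the traveler is driven into $v_i^{(l,m-i)}$ with $0$ delays remaining, or into $v_i^{(r,m-i+1)}$ with $m-i-1$ delays remaining. Since the latter is precisely the entry vertex of the $(i{+}1)$-st validation gadget with a delay budget that re-establishes Invariant \ref{inv:qbf_game_val_gad_delays}, the argument chains cleanly from one gadget to the next.

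Concretely, I would proceed by induction on $i \in [m-1]$. For the base case $i = 1$, Observation \ref{obs:qbf_game_val_gad_inv_start} guarantees that $s_{n+1}$ is reached with $m-1$ delays, so Lemma \ref{lemma:qbf_game_val_gad} applies: the adversary either terminates the traversal at $v_1^{(l,m-1)}$ with zero delays or forwards the traveler to $v_1^{(r,m-1)}$ (the entry of the second validation gadget) with $m-2$ delays. Inductively, if gadget $i$ is entered with $m-i$ delays, Lemma \ref{lemma:qbf_game_val_gad} offers the adversary the same binary choice, and the right branch hands off to gadget $i+1$ under exactly the hypothesis needed for the next inductive step. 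Hence the adversary's play through the validation gadgets is completely described by a (possibly empty) sequence of "pass" decisions followed by at most one "commit" decision, which selects some terminal vertex $v_i^{(l,m-i)}$ with $0$ delays.

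The only case warranting explicit attention is $i = m-1$: if the adversary passes through all $m-2$ earlier gadgets, they enter the last one with $m-(m-1) = 1$ delay remaining. Applying Lemma \ref{lemma:qbf_game_val_gad} one final time, the two options are $v_{m-1}^{(l,1)}$ with $0$ delays or $v_{m-1}^{(r,2)}$ with $m-(m-1)-1 = 0$ delays, and crucially the "right" choice now terminates the traversal rather than opening a non-existent $m$-th gadget, which accounts for the special vertex $v_{m-1}^{(r,2)}$ listed in the statement. I do not anticipate any genuine obstacle here; the proof is essentially a bookkeeping argument, and the only care required is to check the delay arithmetic at the boundary step $i = m-1$ so that both adversarial options coincide in leaving zero delays upon arrival at a terminal vertex.
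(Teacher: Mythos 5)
Your proof is correct and matches the paper's (implicit) argument: the paper likewise treats the observation as an immediate consequence of iterating Lemma~\ref{lemma:qbf_game_val_gad} across the $m-1$ validation gadgets, with Observation~\ref{obs:qbf_game_val_gad_inv_start} supplying the base case and the delay count $m-i$ re-establishing the invariant at each hand-off. Your explicit check of the boundary step $i = m-1$, where the right branch leads to $v_{m-1}^{(r,2)}$ with $m-(m-1)-1 = 0$ delays left, is a welcome piece of bookkeeping that the paper leaves implicit.
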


Each end-point is corresponding to a specific clause of $\varphi$, the vertex $v_i^{(l,m-i)}$ corresponds to the $i$-th clause, for $i \in [m-1]$, and $v_{m-1}^{(r,2)}$ corresponds to the $m$-th clause.
Now, we can show that Player~1 can reach the target vertex from these vertices if and only if the clause is satisfied under the variable selection done in the selection gadgets.

\begin{lemma} \label{lemma:qbf_game_win_if_phi_true}
	Let $\tilde{x}_1, \tilde{x}_2, \ldots, \tilde{x}_n$ be a truth assignment with $\tilde{x}_i = x_i$ if the vertex $x_i$ has been traversed, and $\tilde{x}_i = \bar{x}_i$ if the vertex $\bar{x}_i$ has been traversed.
	Furthermore, let there be no remaining delays and the current time step at latest at $o_{m+n}$.
	From vertex $v_i^{(l,m-i)}$ the traveler can reach the target vertex $z$ if and only if the $i$-th clause of $\varphi$ is satisfied under the truth assignment $\tilde{x}_1, \tilde{x}_2, \ldots, \tilde{x}_n$.
	Furthermore, from vertex $v_{m-1}^{(r,2)}$ the traveler can reach the target vertex $z$ if and only if the $m$-th clause of $\varphi$ is satisfied under the truth assignment $\tilde{x}_1, \tilde{x}_2, \ldots, \tilde{x}_n$.
\end{lemma}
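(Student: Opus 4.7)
The plan is a direct case analysis of which time arcs are actually available from the vertex in question, exploiting the path-game constraint (no vertex is revisited) together with the hypotheses that no delays remain and that the current time is at most $o_{m+n}$.

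First I would catalog the outgoing arcs from $v_i^{(l,m-i)}$: by construction there is the arc $v_i^{(l,m-i)} \xrightarrow{o_{n+i}+m-i} z$ and, for each of the three literals $l_i^{(j)}$ of clause~$c_i$, exactly one arc at time $o_{n+m}+1$, going to $\bar{x}_k$ if $l_i^{(j)}=x_k$ and to $x_k$ if $l_i^{(j)}=\bar{x}_k$. An analogous enumeration holds for $v_{m-1}^{(r,2)}$, whose literal arcs correspond to clause~$c_m$. Because the current time is strictly later than $o_{n+i}+m-i$ whenever this lemma is actually invoked (indeed, by \cref{lemma:qbf_game_val_gad} the traveler arrives at $v_i^{(l,m-i)}$ at time $o_{n+i}+m-i+1$; for $v_{m-1}^{(r,2)}$ the same reasoning applies), the direct arc to $z$ inside the validation gadget is no longer usable. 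Hence the only way to make further progress is to use one of the literal arcs at time $o_{n+m}+1$.

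Second, I would record two facts about the selection gadgets that follow from the definition of $\tilde x_k$: the vertex $x_k$ has been visited iff $\tilde x_k=\text{true}$, and the vertex $\bar x_k$ has been visited iff $\tilde x_k=\text{false}$; on the other hand the vertices $z'$ and $z$ have never been visited so far. Combining this with the path constraint yields: the arc $v_i^{(l,m-i)}\xrightarrow{o_{n+m}+1}\bar x_k$ (present when $l_i^{(j)}=x_k$) is traversable iff $\bar x_k$ has not been visited, i.e., iff $\tilde x_k=\text{true}$, i.e., iff the literal $x_k$ evaluates to true under $\tilde x_1,\dots,\tilde x_n$; symmetrically for the arc into $x_k$ corresponding to a negative literal $\bar x_k$. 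The same argument applies verbatim at $v_{m-1}^{(r,2)}$.

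Third, I would verify that once the traveler has successfully taken a literal arc into an unvisited $x_k$ or $\bar x_k$ at time $o_{n+m}+1$, reaching $z$ is forced and always possible: from that vertex the only outgoing arc at that time is the one into $z'$ (added in the selection gadget), $z'$ is unvisited, and $z'\xrightarrow{o_{n+m}+1} z$ completes the route. This handles the forward direction: a satisfying literal yields a usable arc which, in turn, extends to a temporal path to $z$. For the backward direction, if the clause is unsatisfied then \emph{every} literal arc leads to a vertex that has already been visited in some selection gadget; combined with the unavailability of the direct $z$-arc inside the validation gadget, this leaves the traveler with no legal move, so $z$ cannot be reached.

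The main obstacle, and the reason the lemma is non-trivial, is precisely the bookkeeping about which of $x_k,\bar x_k$ has been visited: one must invoke \cref{lemma:qbf_game_sel_gad_exist,lemma:qbf_game_sel_gad_univ} to be sure that in each selection gadget \emph{exactly} one of the two literal vertices is visited (and that $z',z$ remain untouched), so that the equivalence between ``literal arc usable'' and ``literal satisfied by $\tilde x$'' really does hold. Everything else is a short inspection of the time labels to confirm that no alternative escape route from $v_i^{(l,m-i)}$ or $v_{m-1}^{(r,2)}$ exists under the stated time bound.
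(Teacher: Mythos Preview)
Your proposal is correct and follows essentially the same case analysis as the paper's own proof: identify the literal arcs out of $v_i^{(l,m-i)}$ (resp.\ $v_{m-1}^{(r,2)}$), use \cref{lemma:qbf_game_sel_gad_exist,lemma:qbf_game_sel_gad_univ} to argue that exactly one of $x_k,\bar x_k$ has been visited, and conclude that a literal arc is traversable iff the corresponding literal is satisfied, after which the route through $z'$ to $z$ is forced. One point where you are in fact more careful than the paper: you explicitly account for the direct arc $v_i^{(l,m-i)}\xrightarrow{o_{n+i}+m-i} z$ and explain (via the arrival time established in \cref{lemma:qbf_game_val_gad}) why it is unavailable, whereas the paper's proof silently omits this arc when listing the outgoing edges.
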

\begin{proof}
	For $i \in [m]$, let the $i$-th clause be $(l_i^{(1)} \vee l_i^{(2)} \vee l_i^{(3)})$.
	
	($\Leftarrow$): Assume for $i \in [m]$, the $i$-th clause is satisfied.
	Thus, the traveler is in vertex $v_i^{(l,m-i)}$ for $i < m$ or $v_{m-1}^{(r,2)}$ for $i=m$.
	Since the clause is satisfied, there exists a satisfied literal $l_i^{(j)}$ for $j \in [3]$.
	
	(Case 1:) $l_i^{(j)} = x_k$ for some $k \in [n]$.
	The literal being satisfied implies that $\tilde{x}_k = x_k$ and thus, the vertex $x_k$ has been visited.
	Thus, $\bar{x}_k$ has not been visited, since due to \cref{lemma:qbf_game_sel_gad_exist,lemma:qbf_game_sel_gad_univ} exactly one of these vertices are traversed.
	Since $l_i^{(j)} = x_k$, there is a temporal arc $v_i^{(l,m-i)} \xrightarrow{o_{n+m}+1} \bar{x}_k$ (or $v_{m-i}^{(r,2)} \xrightarrow{o_{n+m}+1} \bar{x}_k$ for $i = m$) which is available for the traveler, since $\bar{x}_k$ has not been visited and $o_{n+m}+1$ is larger than the current time step.
	The adversary has no delays left, so the traveler arrives in $\bar{x}_k$ at $o_{n+m}+1$.
	From there the time arcs
	\[
		\bar{x}_k \xrightarrow{o_{n+m}+1} z' \xrightarrow{o_{n+m}+1} z
	\]
	to the target vertex can be taken.
	
	(Case 2:) $l_i^{(j)} = \bar{x}_k$ for some $k \in [n]$.
	The literal being satisfied implies that $\tilde{x}_k = \bar{x}_k$ and thus, the vertex $\bar{x}_k$ has been visited.
	Thus, $x_k$ has not been visited.
	Since $l_i^{(j)} = \bar{x}_k$, there is a temporal arc $v_i^{(l,m-i)} \xrightarrow{o_{n+m}+1} x_k$ (or $v_{m-i}^{(r,2)} \xrightarrow{o_{n+m}+1} x_k$ for $i = m$) which is available for the traveler, since $x_k$ has not been visited and $o_{n+m}+1$ is larger than the current time step.
	The adversary has no delays left, so the traveler arrives in $x_k$ at $o_{n+m}+1$.
	From there the time arcs
	\[
	x_k \xrightarrow{o_{n+m}+1} z' \xrightarrow{o_{n+m}+1} z
	\]
	to the target vertex can be taken.	
	
	($\Rightarrow$): Assume the traveler can get from vertex $v_i^{(l,m-i)}$ (or $v_{m-i}^{(r,2)}$  for $i = m$) to $z$.
	The vertex $v_i^{(l,m-i)}$ (or $v_{m-i}^{(r,2)}$  for $i = m$) has exactly one outgoing time arc for each literal of the $i$-th clause. 
	If $l_i^{(j)} = x_k$ for some $k \in [n]$, then there is a time arc at $o_{n+m}+1$ to vertex $\bar{x}_k$, and if $l_i^{(j)} = \bar{x}_k$ for some $k \in [n]$, then there is a time arc at $o_{n+m}+1$ to vertex $x_k$.
	Since the traveler can reach $z$, one of these time arcs are available and thus the end vertex of this arc $x_k$ or $\bar{x}_k$ has not been visited yet. 
	Thus, there exists a $j \in [3]$, so that if $l_i^{(j)} = x_k$, then $\bar{x}_k$ has not been visited, and if $l_i^{(j)} = \bar{x}_k$, then $x_k$ has not been visited.
	However, if $\bar{x}_k$ has not been visited, the vertex $x_k$ has, and vice versa, since due to \cref{lemma:qbf_game_sel_gad_exist,lemma:qbf_game_sel_gad_univ} exactly one of these vertices are traversed.
	Thus, if $l_i^{(j)} = x_k$, then $x_k$ has been visited and the literal~$l_i^{(j)}$ is satisfied.
	Also, if $l_i^{(j)} = \bar{x}_k$, then $\bar{x}_k$ has been visited, and the literal $l_i^{(j)}$ is satisfied.
	Hence, the $i$-th clause is satisfied.
\end{proof}

Finally, we can prove that our constructed \drpg{}-instance is a yes-instance if and only if $\Phi$ is a yes-instance of \textsc{QBF Game}.
Using the previous lemmas, instead of looking at every single game move, we can reduce the possible actions to two per gadget.

\begin{lemma} \label{lemma:qbf_game_prob_equiv}
	$\Phi = Q_1 x_1. Q_2 x_2.\ldots Q_i x_i. \varphi$, with $Q_i \in \{\exists, \forall\}$ for $i \in [m]$, is a yes-instance of \textsc{QBF Game} if and only if the constructed instance $(\GG = (V,E), s_1, z, \delta = 1, x = n' + m - 1)$	is a yes-instance of \drpg{}.
\end{lemma}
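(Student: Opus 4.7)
The plan is to establish a tight correspondence between plays in the two games, leveraging all the structural lemmas already proved about the gadgets. The first step is to invoke the standard fact that we may restrict attention to rational strategies for both sides (as discussed just before Invariant~\ref{inv:sel_gad_inv}); a player with any winning strategy has a rational one, so proving equivalence under rational play suffices.

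Under rational play, Lemmas~\ref{lemma:qbf_game_sel_gad_exist} and~\ref{lemma:qbf_game_sel_gad_univ} together guarantee that the sequence of selection gadgets is traversed in order $s_1 \to s_2 \to \dots \to s_{n+1}$, and that within the $i$-th gadget, whichever of the vertices $x_i$ or $\bar{x}_i$ is visited is chosen by the traveler if $Q_i = \exists$ and by the adversary if $Q_i = \forall$. By Invariant~\ref{inv:sel_gad_inv} and Observation~\ref{obs:qbf_game_val_gad_inv_start}, the traveler enters the validation phase at $s_{n+1}$ with exactly $m-1$ delays remaining. Then Observation~\ref{obs:qbf_game_val_gadgets} (obtained by iterating Lemma~\ref{lemma:qbf_game_val_gad}) says that the adversary can choose any one of $v_1^{(l,m-1)}, \dots, v_{m-1}^{(l,1)}, v_{m-1}^{(r,2)}$, and the traveler reaches it with no delays left and current time at most $o_{n+m}$. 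Finally, Lemma~\ref{lemma:qbf_game_win_if_phi_true} reduces winning from such an endpoint to satisfaction of the corresponding clause under the assignment $\tilde{x}_1,\dots,\tilde{x}_n$ picked in the selection gadgets.

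Combining these ingredients, the proof becomes a direct translation of strategies. For ($\Rightarrow$), given a winning strategy $\sigma$ for Player~1 in $\Phi$, define a traveler strategy as follows: in the $i$-th selection gadget for $Q_i = \exists$, choose $x_i$ or $\bar{x}_i$ according to $\sigma$ applied to the truth values already fixed by adversary choices in earlier universal gadgets; play forced moves elsewhere; in the endpoint of a validation gadget, take the time arc to a satisfied literal's partner vertex as provided by Lemma~\ref{lemma:qbf_game_win_if_phi_true}. Since $\sigma$ wins QBF, the final assignment $\tilde{x}_1,\dots,\tilde{x}_n$ satisfies every clause regardless of the adversary's universal choices, so in particular the one clause the adversary selects via the validation gadgets is satisfied, and the traveler reaches $z$. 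For ($\Leftarrow$), given a winning traveler strategy $\tau$, define a Player~1 strategy by reading off $\tau$'s choice in each existential selection gadget (the adversary's universal moves in QBF are mirrored by the adversary in DRPG). If the resulting assignment falsified some clause $c_i$, then by Lemma~\ref{lemma:qbf_game_val_gad} the adversary could steer to the endpoint corresponding to $c_i$, and by Lemma~\ref{lemma:qbf_game_win_if_phi_true} the traveler could not reach $z$ from there, contradicting that $\tau$ is winning. Hence the assignment satisfies $\varphi$ against every adversary play, so Player~1 wins in QBF.

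The main obstacle is not conceptual but bookkeeping: one must argue that the only deviations from the ``canonical'' play described by the gadget lemmas are irrational (either immediately losing for the player in question, or subsumed by Observation~\ref{obs:more_delays_better}), so that the reduction to rational play truly confines attention to the behaviour captured by Lemmas~\ref{lemma:qbf_game_sel_gad_exist}--\ref{lemma:qbf_game_val_gad}. In particular, one should explicitly rule out any attempt by the traveler to use the ``escape'' arcs into $z'$ from the $x_i$ or $\bar{x}_i$ vertices at time $o_{n+m}+1$ before the validation phase (handled inside the selection gadget lemmas since the adversary still holds a delay), and verify that the no-repeated-vertex constraint of \drpg{} is never the binding restriction except at the final step, where Lemma~\ref{lemma:qbf_game_win_if_phi_true} uses exactly this constraint to encode clause satisfaction.
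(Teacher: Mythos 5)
Your proposal is correct and follows essentially the same route as the paper: restrict to rational strategies, use Lemmas~\ref{lemma:qbf_game_sel_gad_exist}, \ref{lemma:qbf_game_sel_gad_univ}, Observation~\ref{obs:qbf_game_val_gadgets}, and Lemma~\ref{lemma:qbf_game_win_if_phi_true} to reduce each play to a sequence of per-gadget choices mirroring moves in \qbfg{}, and then translate winning strategies in both directions. The only cosmetic difference is that you phrase the DRPG-to-QBF direction as a contradiction argument whereas the paper argues it directly, and your $\Rightarrow$/$\Leftarrow$ labels match the lemma's statement while the paper's labels are swapped, but these are presentational, not substantive.
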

\begin{proof}
	In \drpg{}, assuming both players use a rational strategy, the game can be summarized in the following way:
	In the selection gadgets
	\begin{itemize}
		\item the traveler selects whether vertex $x_i$ or $\bar{x}_i$ is traversed in $\GG$, if $Q_i = \exists$ (\cref{lemma:qbf_game_sel_gad_exist}),
		\item and the adversary selects whether vertex $x_i$ or $\bar{x}_i$ is traversed in $\GG$, if $Q_i = \forall$ (\cref{lemma:qbf_game_sel_gad_univ}),
	\end{itemize}
	in the order of $i=1$ to $n$.
	We use the term \textit{truth assignment in $\GG$} as the truth assignment of the variable $x_i$ to true or false, if the vertex $x_i$ or $\bar{x}_i$ has been traversed in the $i$-th selection gadget, respectively, for all $i \in [n]$.
	Then
	\begin{itemize}
		\item the adversary selects a vertex	corresponding to a clause in $\varphi$ (\cref{obs:qbf_game_val_gadgets}), 
		\item and the traveler wins if and only if the clause is satisfied under the truth assignment in $\GG$ (\cref{lemma:qbf_game_win_if_phi_true}).
	\end{itemize}
	The only freedom of choice for the traveler is thus to select the vertex $x_i$ or $\bar{x}_i$ in selection gadgets for variables quantified by an existential quantifier. 
	The decision is depending on whether $x_j$ or $\bar{x}_j$ has been traversed, for all $j \in [i-1]$.
	Similarly in \textsc{QBF Game}, the only freedom of choice for Player~1 is to select the truth assignment for variables quantified by an existential quantifier. 
	The decision is depending on the truth assignment for the variable $x_j$, for all $j \in [i-1]$.
	Now, assuming a winning strategy for one game, a winning strategy for the other game can be built.
	
	($\Rightarrow$): Assume the traveler has a winning strategy for \drpg{}, telling whether $x_i$ or $\bar{x}_i$ should be traversed, given whether $x_j$ or $\bar{x}_j$ has been traversed, for all $j \in [i-1]$.
	For a \textsc{QBF Game} strategy, given a truth assignment for $x_1, x_2, \ldots, x_{i-1}$, Player~1 assigns $x_i$ to true or false, depending on whether the strategy of \drpg{} tells the traveler to traverse $x_i$ or $\bar{x}_i$, respectively, where for all $j \in [i-1]$, if $x_j$ has been assigned to true, then the vertex~$x_j$ has been traversed, and if $x_j$ has been assigned to false, then the vertex~$\bar{x}_j$ has been traversed.
	Since the traveler's strategy is winning, but the adversary selects a clause in $\varphi$, every clause in $\varphi$ is satisfied under all truth assignments in $\GG$ that resulted by applying the winning strategy.
	The truth assignment for \textsc{QBF Game} is the same as the truth assignment in $\GG$, since the selection followed from the winning strategy of \drpg{}.
	Since all clauses are satisfied, $\varphi$ is satisfied, and Player~1 has a winning strategy for \textsc{QBF Game}.
	
	($\Leftarrow$): Assume Player~1 has a winning strategy for \textsc{QBF Game}, telling whether $x_i$ should be assigned to true or false, given the truth assignment for $x_1, x_2, \ldots, x_{i-1}$.
	For a \drpg{} strategy, given whether the vertex $x_j$ or $\bar{x}_j$ has been traversed, for all $j \in [i-1]$, the traveler traverses $x_i$ or $\bar{x}_i$, depending on whether the strategy of \textsc{QBF Game} tells Player~1 to assign $x_i$ to true or false, respectively, where for $j \in [i-1]$, if the vertex $x_j$ has been traversed, then the truth assignment of~$x_j$ is true, and if the vertex $\bar{x}_j$ has been traversed, then the truth assignment of~$x_j$ is false.
	Since Player~1's strategy is winning, all clauses in $\varphi$ are satisfied under all truth assignments that resulted by applying the winning strategy.
	The truth assignment in $\GG$ is the same as the truth assignment for \textsc{QBF Game}, since the selection followed from the winning strategy of \textsc{QBF Game}.
	Now, whatever clause vertex the adversary chooses, the traveler wins, since the clause is satisfied.
	Hence, the traveler has a winning strategy for \drpg{}.
\end{proof}

Additionally, we show that the construction of $\GG$ can be done in polynomial time.

\begin{lemma} \label{lemma:qbf_game_poly_time_construction}
	Given a \textsc{QBF Game}-instance $\Phi = Q_1 x_1. Q_2 x_2. \ldots Q_n x_n. \varphi$, with $Q_i \in \{\exists, \forall\}$ for all $i \in [n]$, and $\varphi$ having $m$ clauses, the temporal graph $\GG$ can be constructed in $\bigO(n^2 + m^2)$ time.
\end{lemma}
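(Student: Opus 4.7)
The plan is to bound the total size of the constructed temporal graph~$\GG$ and then observe that, with the offsets~$o_i$ precomputed, every vertex and time arc can be emitted in constant time. Thus the proof reduces to a gadget-by-gadget counting argument.

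First, I would bound the contribution of the selection gadgets. Each existentially quantified variable contributes a gadget of constant size, giving $O(n)$ overall. Each universally quantified variable~$x_i$ contributes the chain $\bar{x}_i, \bar{x}_i^{(1)}, \ldots, \bar{x}_i^{(n'+m-i'-2)}$ and a few auxiliary vertices and arcs, for a total of $O(n+m)$ vertices and time arcs per gadget. Summed over the at most~$n'\le n$ universal quantifiers, this yields $O(n(n+m)) = O(n^2 + nm)$.

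Next, I would bound the contribution of the validation gadgets. The $i$-th validation gadget contains the two chains $v_i^{(l,1)}, \ldots, v_i^{(l,m-i)}$ and $v_i^{(r,1)}, \ldots, v_i^{(r,m-i+1)}$ together with the shortcut arcs to~$z$, giving $O(m-i)$ vertices and arcs per gadget; hence $\sum_{i=1}^{m-1} O(m-i) = O(m^2)$ in total. Each of the $m$ clauses additionally contributes three literal arcs, for an extra $O(m)$.

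Combining the bounds and using $nm \le (n^2+m^2)/2$, the graph~$\GG$ has $O(n^2 + m^2)$ vertices and time arcs. The offsets~$o_1, \ldots, o_{n+m}$ can be computed by a single left-to-right sweep over the gadgets, after which each vertex and arc is created in constant time. Hence the entire construction runs in $O(n^2 + m^2)$ time. The only mild obstacle is consistent bookkeeping of the offsets and the chain lengths (which depend on~$i'$ for the universal gadgets and on~$m-i$ for the validation gadgets), but this is straightforward once the counts above are in hand.
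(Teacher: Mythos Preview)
Your proposal is correct and follows essentially the same approach as the paper: a gadget-by-gadget count giving $O(n^2+nm)$ for the selection gadgets and $O(m^2)$ for the validation gadgets, then absorbing the cross term via $nm \le (n^2+m^2)/2$. The paper's proof is slightly coarser (it bounds each validation gadget by $O(m)$ rather than $O(m-i)$) but otherwise identical in structure and conclusion.
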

\begin{proof}
	A selection gadget for a variable quantified by an existential quantifier adds four vertices and time arcs to $\GG$.
	A selection gadget for a variable quantified by a universal quantifier adds $\bigO(n + m)$ vertices and time arcs to $\GG$.
	Since there are $n$ selection gadgets this is a total of $\bigO(n^2 + n \cdot m)$ vertices and time arcs.
	Each validation gadget adds $\bigO(m)$ vertices and time arcs to $\GG$.
	Since there are $m$ validation gadgets, this is a total of $\bigO(m^2)$ vertices and time arcs.
	Finally two additional vertices $z'$ and $z$ are added to $\GG$.
	For each clause there are three time arcs to $z'$ ($3\cdot m$ time arcs in total).
	There is a time arc from $x_i$ to $z'$, and from $\bar{x}_i$ to $z'$, for all $i \in [n]$ ($2\cdot n$ time arcs in total).
	There is one additional time arc from $z'$ to~$z$.
	Hence, $\GG$ has $\bigO(n^2 + m^2)$ vertices and time arcs.
	For the construction it is necessary to iterate over $\Phi$ once.
	Hence, the time needed to build~$\GG$ is determined by the size of $\GG$ and takes $\bigO(n^2 + m^2)$  time.
\end{proof}

Due to \cref{lemma:qbf_game_prob_equiv,lemma:qbf_game_poly_time_construction} we have \textsc{QBF Game} $\lepoly$ \drpg{}.
Since \textsc{QBF Game} is \CPSPACE{}-complete, we can conclude \CPSPACE{}-hardness for \drpg{}.

\begin{theorem}\label{thm:pspacehard}
	\drpg{} is \CPSPACE{}-hard.
\end{theorem}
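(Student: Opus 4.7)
The plan is to exhibit a polynomial-time many-one reduction from \qbfg{}, which is known to be \CPSPACE{}-complete, to \drpg{}. The reduction itself is exactly the one described in detail above: given $\Phi = Q_1 x_1 \ldots Q_n x_n.\,\varphi$ (in CNF with three literals per clause), build the temporal graph $\GG$ by chaining $n$ selection gadgets (whose shape depends on the quantifier type) followed by $m-1$ validation gadgets, together with the terminal vertices $s_1, z$, the delay budget $x = n' + m - 1$ and the delay amount $\delta = 1$.

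Assembling the theorem is then a short three-step argument. First, I would invoke \cref{lemma:qbf_game_prob_equiv} to translate ``yes-instance of \qbfg{}'' into ``yes-instance of \drpg{}'' and back. Second, I would invoke \cref{lemma:qbf_game_poly_time_construction} to certify that $\GG$ can be built in $\bigO(n^2 + m^2)$ time from $\Phi$. Third, combining these gives \qbfg{} $\lepoly$ \drpg{}, from which \CPSPACE{}-hardness of \drpg{} follows by transitivity of polynomial-time many-one reductions.

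The substance of the argument, and where I would expect to spend my effort, is in justifying the equivalence underlying \cref{lemma:qbf_game_prob_equiv}. The main obstacle is delay bookkeeping: the adversary is free to apportion all $n' + m - 1$ delays arbitrarily across the gadgets, so one has to show that under rational play each gadget consumes exactly the right amount and no delay is ``wasted''. This is precisely the content of \cref{inv:sel_gad_inv} and \cref{inv:qbf_game_val_gad_delays}, both of which crucially exploit \cref{obs:more_delays_better} to restrict to rational strategies. A second subtlety is the role of the path constraint: at the endpoint of a validation gadget the traveler can reach $z$ \emph{only} via a literal vertex not yet consumed during the selection phase, and by \cref{lemma:qbf_game_win_if_phi_true} this happens iff the selected clause is satisfied under the chosen assignment. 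This is precisely the mechanism that ties clause satisfaction to reachability, and it is also where \drpg{} diverges from \drg{}, explaining why the same reduction does not yield hardness for the walk variant.
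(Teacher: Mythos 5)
Your proposal follows exactly the paper's own argument: a polynomial-time many-one reduction from \qbfg{} built from the chained selection and validation gadgets, with correctness supplied by \cref{lemma:qbf_game_prob_equiv} (resting on \cref{inv:sel_gad_inv}, \cref{inv:qbf_game_val_gad_delays}, and \cref{obs:more_delays_better}) and polynomial running time by \cref{lemma:qbf_game_poly_time_construction}. Your remarks on delay bookkeeping and the essential role of the path constraint correctly identify the load-bearing parts of the argument, so no further comparison is needed.
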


\subsection{PSPACE-containment of \textsc{Delayed-Routing Path Game}}
 \label{sec:drpg_pspace_hard}
Complementing the \CPSPACE{}-hardness of \drpg{} from the previous section, we now show that \drpg{} is containted in \CPSPACE{}, which lets us conclude \CPSPACE{}-completeness of \drpg{}.

We show this by modifying the dynamic program for \drg{} (\cref{sec:drg_poly_alg}) to also save the set of vertices that already has been visited for every state.
This will cause the dynamic programming table to have exponential size; however, we can evaluate it recursively, that is, recomputing every entry when needed.
In this way we only require polynomial space. Formally, we adapt the recursive formula as follows.

Let $I = (G = (V,E), s,z \in V, \delta,x \in \NN)$ be an instance of \drpg{}. 
A game state can be fully described by the $4$-tuple $(v,t,y,V')$, where $v \in V$ is the current vertex, $t \in \NN$ is the current time step, $y\in[0,x]$ is the number of remaining delays, and $V' \subseteq V$ the set of visited vertices.
We may take~$t$~to be from the set $T :=  \{1, \tatime(e) + \tatrav(e), \tatime(e) + \tatrav(e) + \delta \mid e \in E\}$.
The starting game state is $(s,1,x,\emptyset)$.

We define $F:V\times T \times[-1,x]\times 2^{V}\rightarrow \{\true,\false\}$ to indicates for each game state whether the traveler has a winning strategy from that game state.
Denote by $E_t(v) := \{(v, w, t', \lambda) \in E \mid t' \geq t\}$
the set of all time arcs that are available at $v \in V$ at time~$t$ or later. %
Then, for all $v \in V \setminus \{z\}$, $t \in T$, $y \in [0,x]$, and $V'\subseteq V\setminus\{z\}$
the following holds:
\begin{align}
F(z,t,y,V') &= \true, \label{rec:base2}\\
F(v,t,-1,V') &= \true, \label{rec:minus2}
\end{align}
\vspace{-4ex}
\begin{equation}
\begin{split}
F(v,t,y,V') = \bigvee_{e \in E_t(v)} &\big(\tadest(e)\notin V'\land F(\tadest(e), \tatime(e) + \tatrav(e),y,V'\cup\{v\}) \land{} \\ 
 &  F(\tadest(e), \tatime(e) + \tatrav(e) + \delta,y-1,V'\cup\{v\}) \big), \label{rec:general3}
\end{split}
\end{equation}
where the empty disjunction evaluates to \false.

Using \eqref{rec:base2}--\eqref{rec:general3}, we get the following result by evaluating it in a depth-first-search fashion from the starting configuration.

\begin{proposition} \label{cor:drpg_in_pspace}
	\drpg{} is contained in \CPSPACE{}.
\end{proposition}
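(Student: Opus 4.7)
The plan is to evaluate $F(s, 1, x, \emptyset)$ directly from the recurrences \eqref{rec:base2}--\eqref{rec:general3} by a recursive, depth-first procedure that, instead of storing the exponentially large table, simply recomputes every needed value on the fly. Correctness is immediate once one observes that \eqref{rec:base2}--\eqref{rec:general3} uniquely determine $F$ on all reachable states, so the whole argument reduces to bounding the space used by the recursion stack.

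The key step will be to show that the recursion depth is at most $2|V|$. Each recursive call moves from a state $(v, t, y, V')$ to a child of the form $(\tadest(e), \cdot, \cdot, V' \cup \{v\})$ with $\tadest(e) \notin V'$, by the guard in \eqref{rec:general3}. I distinguish two cases: if $v \notin V'$, then $|V' \cup \{v\}| = |V'| + 1$, so the visited set strictly grows; if instead $v \in V'$, then $V' \cup \{v\} = V'$ does not grow, but the child nonetheless satisfies $\tadest(e) \notin V' = V' \cup \{v\}$, so the next recursive call falls into the first case and does strictly enlarge the visited set. Since there can be at most $|V|$ strict enlargements along any chain, the recursion stack has height at most~$2|V|$.

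The remainder is routine stack accounting. Each frame needs to store only the current state $(v, t, y, V')$, which fits in $O(|V| + \log|T| + \log x)$ bits, plus an iterator over $E_t(v)$ indicating which time arc is currently being tried, costing an additional $O(\log|E|)$ bits. Multiplying by the depth bound $2|V|$ yields $O(|V|^2 + |V|\log(|E|\cdot x))$ total space, polynomial in the input. The only delicate point I anticipate is justifying the ``non-growing'' case cleanly (it is exactly what self-loops, or more generally arcs with $\tadest(e) = v$, can cause); once that is handled as above, \CPSPACE{}-containment of \drpg{} follows.
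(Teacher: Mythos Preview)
Your proposal is correct and follows essentially the same approach as the paper: evaluate $F$ by depth-first recursion from $(s,1,x,\emptyset)$, keeping only the current branch in memory, and bound the recursion depth by the fact that the visited set~$V'$ can grow at most $|V|$ times. The paper states the depth bound as $O(|V|)$ without further comment; your case analysis for when $v\in V'$ (which, as you note, can only arise via a self-loop) is a slightly more careful justification of the same bound, yielding the explicit $2|V|$.
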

\begin{proof}
The correctness of our approach can be shown in a way analogous to \cref{lem:DP}.
Note that since the set $V'$ of visited vertices is growing with each move of the traveler, the game cannot run infinitely.
Thus, all entries can be computed by means of \eqref{rec:base2}--\eqref{rec:general3}.

Instead of storing all (exponentially many) entries,
we evaluate $F$ in a depth-first-search fashion from the starting configuration $(s,1,x,\emptyset)$.
This only requires us to keep the current branch of the search tree in memory.
Since each game state requires polynomial space and there are at most $O(|V|)$ moves in a game of \drpg{} (every vertex can be visited at most once),
we only require polynomial space.
\end{proof}

From \cref{thm:pspacehard} and \cref{cor:drpg_in_pspace} we can now conclude that \drpg{} is \CPSPACE{}-complete.
\begin{corollary}
\drpg{} is \CPSPACE{}-complete.
\end{corollary}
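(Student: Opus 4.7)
The plan is to simply combine the two results that have just been established in the preceding subsections. Specifically, \cref{thm:pspacehard} asserts that \drpg{} is \CPSPACE{}-hard, and \cref{cor:drpg_in_pspace} asserts that \drpg{} lies in \CPSPACE{}. By definition, a problem is \CPSPACE{}-complete exactly when it is both \CPSPACE{}-hard and a member of \CPSPACE{}, so no further work is required.

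Concretely, I would present the proof as a two-sentence argument: cite \cref{cor:drpg_in_pspace} for membership, cite \cref{thm:pspacehard} for hardness, and conclude \CPSPACE{}-completeness. There is no hidden subtlety to address, since the reduction underlying \cref{thm:pspacehard} is a polynomial-time many-one reduction from the \CPSPACE{}-complete \qbfg{} (which is the standard reduction notion under which \CPSPACE{} is closed), and the containment argument via depth-first evaluation of the recursion \eqref{rec:base2}--\eqref{rec:general3} already handles the only potentially delicate point, namely that play must terminate because the set $V'$ of visited vertices strictly grows at each move of the traveler.

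The main (and only) obstacle in this step is essentially bookkeeping: making sure the notion of reduction used in \cref{lemma:qbf_game_prob_equiv,lemma:qbf_game_poly_time_construction} is indeed polynomial-time many-one, which it is by construction. Since nothing else needs to be proved, the corollary is immediate, and I would state the proof in a single short paragraph following the corollary.
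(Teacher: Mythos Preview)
Your proposal is correct and matches the paper's approach exactly: the paper likewise derives the corollary in a single sentence by combining \cref{thm:pspacehard} (hardness) with \cref{cor:drpg_in_pspace} (containment). The additional sanity checks you mention are fine but unnecessary, since the paper already established them in the cited results.
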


\section{Conclusion and Outlook} \label{ch:conclusion}
On the spectrum of delay-related routing problems,
we have studied two extreme (but natural) cases in terms of when information about the delays is made available.
Interestingly, both are polynomial-time solvable, whereas a ``middle ground'' case studied in companion work turned out NP-hard.

It might also seem surprising that \drg{} is efficiently solvable while \drpg{} is \CPSPACE{}-complete.
However, this situation is not unprecedented.
For example, deciding whether a temporal path under waiting time constraints exists (\textsc{$\Delta$-Restless Temporal Path}) is \CNP{}-complete~\cite{TemporalPathsUnderWaitingTime}, while finding temporal walks under waiting time constraints can be done in polynomial time~\cite{BentertHNN20TempWalkWaitingTime}.
Similarly, counting foremost temporal paths is \CSHRPP{}-hard~\cite{rad2017countingPaths}, while counting of foremost temporal walks can be done in polynomial time~\cite{moltCountingEff}.

We remark that instead of delaying edges by increasing their traversal time,
it is also sensible to instead delay their time label.
It can be shown that our results on \drc{} transfer also to this modified version.
For \drg{} the situation is more complicated, we leave this open for future work.

Even more different notions of delays could also be explored.
While in our definitions up to $x \in \NN$ time arcs can be delayed by a fixed integer $\delta$ each,
one could also define an overall ``budget'' $\Delta$ which can be distributed among all time arcs.
Thus, a time arc could be delayed by more than~$\delta$ or more than~$x$~time arcs could be delayed by less than~$\delta$ each.

\clearpage

\bibliography{strings-long,bibfile}

\end{document}